\newcommand{\tr}{\operatorname{tr}}
\newcommand{\id}{\mathds{1}}
\newcommand{\eps}{\varepsilon}
\newcommand{\rhot}{\tilde{\rho}}
\newcommand{\rhob}{\bar{\rho}}
\newcommand{\rhoh}{\hat{\rho}}
\newcommand{\taub}{\bar{\tau}}
\newcommand{\omegat}{\tilde{\omega}}
\newcommand{\cM}{\mathcal{M}}
\newcommand{\bX}{\mathsf{X}}
\newcommand{\bY}{\mathsf{Y}}
\newcommand{\bR}{\mathsf{R}}
\newcommand{\bS}{\mathsf{S}}
\newcommand{\bK}{\mathsf{K}}
\def\bra#1{\mathinner{\langle{#1}|}}
\def\ket#1{\mathinner{|{#1}\rangle}}
\def\braket#1#2{\mathinner{\langle{#1}|{#2}\rangle}}
\def\brakket#1#2#3{\mathinner{\langle{#1}|{#2}|{#3}\rangle}}
\def\abs#1{\left|{#1}\right|}
\def\absb#1{\big|{#1}\big|}
\def\proj#1{\ket{#1}\!\!\bra{#1}}
\theoremstyle{plain}
\newtheorem{lemma}{Lemma}
\newtheorem{proposition}[lemma]{Proposition}
\theoremstyle{plain}
\newtheorem{theorem}[lemma]{Theorem}
\theoremstyle{remark}
\newtheorem{definition}{Definition}
\begin{document}

\title{The Link between Entropic Uncertainty and Nonlocality}

\author{Marco Tomamichel}
\email{cqtmarco@nus.edu.sg}
\affiliation{Centre for Quantum Technologies, National University of Singapore, Singapore}
\affiliation{Institute for Theoretical Physics, ETH Zurich, 8093 Zurich, Switzerland}
\author{Esther H\"anggi}
\affiliation{Centre for Quantum Technologies, National University of Singapore, Singapore}

\begin{abstract}
Two of the most intriguing features of quantum physics are the 
\emph{uncertainty principle} and the occurrence of \emph{nonlocal correlations}.
The uncertainty principle states that there exist 
pairs of incompatible measurements on quantum systems such that their outcomes 
cannot both be predicted. On the other hand, nonlocal correlations of 
measurement outcomes at different locations cannot 
be explained by classical physics, but appear in the presence of entanglement.
Here, we show that these two fundamental quantum effects are
quantitatively related. 
Namely, we provide an entropic uncertainty relation for the outcomes of two binary measurements,
where the lower bound on the uncertainty is quantified in terms of the maximum 
Clauser-Horne-Shimony-Holt value that can be achieved with these measurements. 
We discuss applications of this uncertainty relation in quantum cryptography, in particular,
to certify quantum sources using untrusted devices.
\end{abstract}

\maketitle

\section{Introduction}


A remarkable characteristic of quantum physics is the
\emph{uncertainty principle}, as first 
described by Heisenberg~\cite{heisenberg27} and Robertson~\cite{robertson29}. It expresses the fact 
that there exist certain observable properties of a quantum system 
such that knowledge of one necessarily implies uncertainty 
about the other. 
In recent relations, starting with~\cite{deutsch83,maassen88}, the uncertainty of a measurement 
is often quantified in terms of entropies evaluated for the probability 
distribution over measurement outcomes induced by Born's rule.
Roughly speaking, if the distribution over the different measurement outcomes is close to uniform, 
the entropy is large and the uncertainty high; on the other hand, a peaked distribution leads to small entropy and low uncertainty. 
An entropic uncertainty relation provides a lower bound on the sum of the entropies
of two or more alternative measurements that is valid for all 
states of the quantum system prior to measurement. This bound is trivial for compatible measurements and
can generally be seen as a measure of ``incompatibility'' of the measurements.
We restrict the discussion to measurements with a finite number of different outcomes hereafter,
and point to a recent review of the topic by Wehner and Winter~\cite{wehner09}.

A prominent example of such an uncertainty relation is the one shown by Maassen and 
Uffink~\cite{maassen88}. It states that the Shannon entropy of the outcomes of two non-degenerate measurements, $X$ and $Y$, is lower bounded
by a function of their \emph{overlap}, $c$. Namely,
\begin{align}
  \label{eq:ucrnoside}
  H(X) + H(Y) \geq -\log_2 c, \qquad \textnormal{where} 
  \qquad c = \max_{i,j} \absb{\braket{\phi^i}{\psi^j}}^2 \,.
\end{align}
The overlap of the two measurements is a function of their eigenvectors, $\ket{\phi^i}$ and $\ket{\psi^j}$, respectively.
(We shall make this statement more formal in the following sections.)

In~\cite{berta10,tomamichel11,coles10,colbeck11}, entropic uncertainty relations have been extended to include 
the case where observers have access to a quantum memory, i.e.\ a quantum system that is correlated with the state prior to measurement. 
Note that an entangled observer can in principle perfectly predict the outcomes of both measurements appearing in Eq.~\eqref{eq:ucrnoside} by applying an appropriate measurement on his memory. Thus, Eq.~\eqref{eq:ucrnoside} is no longer valid when the Shannon entropies are replaced by von Neumann entropies conditioned on the observers memory.
(We refer to the discussion in~\cite{berta10} for more details.) This limitation can be overcome by introducing tripartite uncertainty relations, where one considers two separate quantum memories, $B$ (controlled by Bob) and $C$ (controlled by Charlie) and takes advantage of the monogamy of entanglement. Surprisingly, uncertainty relations of
a similar form as~\eqref{eq:ucrnoside} result, but now the uncertainty is formulated in terms of 
conditional  von Neumann entropies and reads~\cite{berta10}
\begin{align}
  \label{eq:ucr}
  H(X|B) + H(Y|C) \geq - \log_2 c \,.
\end{align}

This inequality can be interpreted as follows. If Bob can predict the outcome of the $X$ measurement with certainty (i.e., $H(X|B) = 0$), then Charlie necessarily has uncertainty about the outcome of the $Y$ measurement (i.e., $H(Y|C) > 0$) as long as the measurements are incompatible (i.e., $c < 1$).
Note also that~\eqref{eq:ucr} implies~\eqref{eq:ucrnoside} due to the strong sub-additivity 
of the von Neumann entropy~\cite{lieb73} and is, therefore, strictly stronger. 

In the context of cryptography, uncertainty of an eavesdropper
implies (partial) secrecy, and indeed entropic uncertainty 
relations have been employed to show cryptographic security~\cite{koashi06,damgaard07,tomamichel11,tomamichellim11,damgaard08}.
More generally, the usefulness of these uncertainty relations can be understood from the fact that the entropies on the lefthand side of~\eqref{eq:ucrnoside} and~\eqref{eq:ucr} characterize operational quantities in information theory, e.g.\ the asymptotic data compression rate~\cite{shannon48,devetak03}.  


Another phenomenon distinguishing quantum from classical physics is the occurrence of 
\emph{nonlocal correlations}. It has already been observed by Einstein, Podolsky and 
Rosen~\cite{epr35} that quantum mechanics predicts correlations between entangled, but 
spatially separated particles, which are stronger than one would intuitively expect.
Bell~\cite{bell64} later showed that these correlations 
cannot be explained by any classical local theory; hence, they are called nonlocal. 

Nonlocality can be quantified using so-called Bell inequalities~\cite{bell64}. 
A prominent example is the Clauser-Horne-Shimony-Holt (CHSH) inequality~\cite{CHSH}, 
which considers a bipartite setup where two separated parties, called Alice ($A$) and 
David ($D$), share a potentially entangled quantum state.
Both parties randomly choose one out of two binary measurements that they apply 
to their share of the quantum state.
We denote the outcomes of Alice's measurements by the random variables $X$ and $Y$ (as in the setup of the uncertainty relation) and David's outcomes by $R$ and $S$, depending on his choice of measurement.
The {CHSH} inequality states that, for any classically correlated state, 
it holds that $\beta \leq 2$, where
\begin{align}
  \label{eq:beta}
\beta &= 2\Pr[X = R] + 2\Pr[Y = R] + 2\Pr[X = S] + 2\Pr[Y \neq S] - 4
\end{align}
is called the \emph{{CHSH} value}. 
If $\beta > 2$, we call the correlation nonlocal, and quantum mechanics allows correlations that achieve up to $\beta_{\max} = 2\sqrt{2}$, which is called Tsirelson's bound~\cite{tsirelson80}.
(Nonlocal correlations can, for example, be realized using an entangled pair of spin-$\nicefrac{1}{2}$ particles, where the 
choice of measurement corresponds to a spin direction. 
However, we will not make any assumption about how the system is physically realized in the following.)

The remainder of this paper is structured as follows. Section~\ref{sc:rw} discusses related work. 
Section~\ref{sc:main} states the main results of our work, 
which provide a link between entropic uncertainty and nonlocality.  Finally,
Section~\ref{sc:app} sketches an
application of our results to self-testing sources of {Bennett-Brassard 84} states. 
The formal proofs
of the main results are deferred to the appendix.

\section{Related Work}
\label{sc:rw}

The main result of this paper is a quantitative relation between entropic uncertainty and nonlocality.
The fact that the incompatibility of local measurements and nonlocality are related in some way
is folklore knowledge and follows, for example, from the work of Tsirelson~\cite{tsirelson80}.
For the case when the systems are restricted to qubits, a bound on the maximal CHSH value in
terms of the angle between local measurements has been derived by Seevink and Uffink~\cite{seevinck07}.
The analytical form of Relation~\eqref{eq:overlapupperbound}
has been conjectured by Horodecki~\cite{horodecki-pc11} and derived independently by 
Lim~\cite{lim-pc12} for the case of single qubit systems.
Mayers and Yao have shown that in order to reach the maximal 
CHSH value allowed by quantum physics, the state and measurements essentially 
need to be (equivalent to) a fully entangled state and optimal {CHSH} measurements even when they 
are embedded in higher dimensions~\cite{my98,my04}. They also employed this result in
quantum cryptography, where they used it to construct self-testing sources.

We improve these results by providing an exact
analytical relation that characterizes all allowed combinations of local overlap and CHSH value.
In particular, our result is independent of the system dimension and
the quantum state under consideration. 
Furthermore, the overlap\,|\,in contrast to other measures of incompatibility based on the 
commutator of the observables or the angle between measurements that have been 
investigated previously\,|\,attains operational
meaning in quantum information theory through the entropic uncertainty relations.
Following Mayers and Yao, we also sketch an application our result to self-testing sources.

On a related topic, Oppenheim and Wehner~\cite{oppenheim10}\,|\,for a class of generalized physical theories that includes quantum mechanics and classical theory\,|\,showed that the presence of uncertainty, via steering, directly limits the maximally achievable nonlocality. 
Our result can be seen as complementary to theirs, as we show that in order 
to achieve a certain nonlocality, at least some specific amount of uncertainty is necessary. 

Device-independent quantum key distribution~\cite{magniez06,acin07,
mckague10,haenggirenner10,masanes11} 
and randomness 
generation~\cite{colbeck10} usually 
bases security on a relation between nonlocality and the randomness of the 
outcomes relative to some (quantum) adversary. Our result allows
to split the security analysis of these protocols into two parts: the nonlocality
of the measured correlations first gives a bound on the uncertainty of
local measurement outcomes, which in turn can be used to ensure security. 
The two parts can be analyzed independently and thus our methods can be used to
simplify such an analysis and, potentially, reduce the required assumptions.

\section{Main Results}
\label{sc:main}

In order to present our main results, we employ the density operator formalism of quantum mechanics in finite dimensions and use standard notation that we quickly summarize here.

\subsection{Notation}

A \emph{quantum state} is represented by a positive semidefinite operator with unit trace acting on a finite-dimensional Hilbert space. We consider states shared between different locations, which are described as operators acting on the tensor product of the respective local spaces. 
For example, we denote by $\rho_{AB}$ a state shared between
locations $A$ and $B$ and by $\rho_B = \tr_A(\rho_{AB})$ its marginal state on $B$, where $\tr_A$ is the partial trace over $A$. 

A quantum measurement can be most generally described by a \emph{positive operator-valued measure} (POVM).
The measure induces a \emph{completely positive trace-preserving map} (CPTPM) that maps states on $A$ to a classical register that contains the measurement outcome. Within the quantum formalism, a classical register (or random variable) is described by a Hilbert space with a fixed basis and states that are 
diagonal in this basis.
For example, let $\bX = \{ M_A^x \}$ be a measurement with discrete outcomes on $A$, i.e.\ a set indexed by $x$ of positive semidefinite operators $M_A^x$ on $A$ satisfying 
$\sum_x M_A^x = \id_A$, where $\id_A$ is the identity operator on $A$. 
The corresponding measurement map, $\cM_{\bX}$ from $A$ to the register $X$, thus produces states of the form
\begin{align*}
  \cM_{\bX} : \rho_{AB} \mapsto \rho_{XB} = \sum_x \proj{x}_X \otimes 
    \tr_A \big( ( M_A^x \otimes \id_B ) \rho_{AB} \big) = \sum_x p_x \proj{x}_X \otimes \rho_B^x \,,
\end{align*}
where $p_x = \tr(M_A^x \rho_A)$ is the probability with which outcome $x$ occurs, $\rho_B^x = \frac{1}{p_x} \tr_A
\big( M_A^x\, \rho_{AB} \big)$ is the state of $B$ conditioned on the event that $x$ 
was measured and $\proj{x}_X$ is the projector onto an element of a fixed orthonormal basis $\{ \ket{x} \}$ 
of $X$. (Note that we often omit writing the identity operator when it is clearly implied by context.)
We call a measurement \emph{projective} if the operators $M_A^x$ are projectors, i.e.\ if $M_A^x M_A^x = M_A^x$ for all $x$.

We also use the fact that non-projective measurements can seen as projective measurements of an enlarged quantum system. More precisely, a \emph{dilation} of a measurement $\bX = \{ M_A^x \}$ consists of an embedding $U: A \to A'$ that embeds $A$ into a larger space $A'$ and a measurement $\bX' = \{ M_{A'}^x \}$ on $A'$ such that $U^{\dagger} M_{A'}^x U = M_A^x$ for all $x$. The latter condition ensures that, for every state $\rho_{AB}$, we have $\rho_{XB} = \cM_{\bX}[\rho_{AB}] = \cM_{\bX'}[U \rho_{AB} U^\dagger]$, i.e.\ the post measurement states of the two measurements are equal. Moreover, Neumark's dilation theorem~\cite{neumark43} ensures that if $A'$ is chosen sufficiently large, there always exists a dilation such that $\bX'$ is projective.

We employ the operator norm $\| \cdot \|$, which evaluates to the largest eigenvalue for Hermitian operators.
Moreover, we define the conditional von Neumann entropy, $H(A|B)_{\rho} := H(AB)_{\rho} - H(B)_{\rho}$, where 
$H(A)_{\rho} := - \tr ( \rho_A \log_2 \rho_A )$. Note that for the above example $H(X)_{\rho}$ reduces to the 
Shannon entropy of the probability distribution $p_x$ induced by the measurement and that $H(X|B)_{\rho} \leq 
H(X)_{\rho}$ due to the strong sub-additivity of the von Neumann entropy~\cite{lieb73}.

This formalism allows us to restate the uncertainty relation~\eqref{eq:ucr} in its full generality~\cite{krishna01,tomamichel11,coles10}.\\
Given any tripartite quantum state $\rho_{ABC}$ and two measurements 
$\bX = \{ M_A^x \}$ and $\bY = \{ N_A^y \}$ on $A$, 
the post measurement states 
$\rho_{XB} = \cM_{\bX}[\rho_{AB}]$ and $\rho_{YC} = \cM_{\bY}[\rho_{AC}]$ satisfy
\begin{align}
  H(X|B)_{\rho} + H(Y|C)_{\rho} \geq - \log_2 c(\bX, \bY) \,, \quad \textrm{where} 
  \quad c(\bX, \bY) := \max_{x,y} \Big\| \sqrt{M_A^x}\, N_A^y \sqrt{M_A^x} \Big\| . \label{eq:ucrfull}
\end{align}

This relation gives a bound on the uncertainty in terms of the overlap which is a function of the two measurements but independent of the quantum state of the system prior to measurement. Note that $c(\bX, \bY)$ reduces
to the expression in~\eqref{eq:ucrnoside} in the case of non-degenerate projective measurements.

\subsection{Generalized Uncertainty Relations}

While the overlap, and thus the uncertainty, can 
be calculated from the POVM elements associated with the two 
measurements alone, it cannot be tested experimentally. 
Hence, in practice, determining the uncertainty a measurement produces 
requires a precise theoretical model of the measurement devices used and any deviation of the physical implementation from this theoretical model may
lead to an overestimation of the produced uncertainty.
Specifically, this is of critical importance in quantum cryptography, where uncertainty of one observer ensures security for the others, and an overestimation of this uncertainty directly leads to a
security loophole.


In this work, we will thus introduce a variation of the overlap, the 
\emph{effective overlap}, which can be tested experimentally in an
important special case as we will see below.
The definition of the effective overlap is motivated by the following two observations. 
\begin{itemize}

\item The entropies on the left-hand side of the uncertainty relation~\eqref{eq:ucrfull} are evaluated for the post measurement states
$\rho_{XB}$ and $\rho_{YC}$ that result from measuring $\bX$ and $\bY$ on $\rho_{ABC}$,
respectively.
However, these post measurement states can generally also be constructed in other ways and it is evident that the right-hand side of~\eqref{eq:ucrfull}
can thus be maximized over all pairs of measurements that achieve the post measurement states 
$\rho_{XB}$ and $\rho_{YC}$.
A generic construction of such measurements is given by any pair of joint dilations $\{ U, \bX' \}$ and $\{ U, \bY' \}$ of $\bX$ and $\bY$ based on the same embedding $U: A \to A'$. The post measurement states can now
alternatively
be constructed as $\rho_{XB} = \cM_{\bX'}[U \rho_{AB} U^\dagger]$ and $\rho_{YC} = \cM_{\bY'}[U \rho_{AC} 
U^\dagger]$ and the right-hand side of~\eqref{eq:ucrfull} can be evaluated either for $c(\bX, \bY)$ or for
$c(\bX', \bY')$.

\item Moreover, any projective 
measurement on $A$\,|\,let us denote it by $\bK = \{ P_A^k \}$\,|\,can be used to slice the state 
into orthogonal parts before the actual measurements are applied. This results in an intermediate 
state of the form $\sum_k P_A^k\, \rho_{ABC} P_A^k$. Moreover, if this extra measurement commutes
with both $\bX$ and $\bY$ on the support of $\rho_A$, the respective post measurement states with and without slicing are indistinguishable, i.e.\ we have $\rho_{XB} = \cM_{\bX}[\rho_{AB}] = 
\cM_{\bX}\big[\sum_k P_A^k\, \rho_{AB} P_A^k\big]$ and $\rho_{YC} = \cM_{\bY}[\rho_{AC}] = 
\cM_{\bY}\big[\sum_k P_A^k\, \rho_{AC} P_A^k\big]$. We will see in the following that
the overlap of the measurements $\bX$ and $\bY$ on the sliced state is given by the
average overlap evaluated for the individual slices.

\end{itemize}

We combine these two observations to
define the \emph{effective overlap} as a function of a \emph{measurement setup}, which consists of two measurements 
and the marginal state $\rho_A$ on $A$ that will be measured. 

\begin{definition}
  Let $\rho_A$ be a quantum state and let
  $\bX = \{ M_A^x \}$ and $\bY = \{ N_A^y \}$ be two measurements on $A$.
  The effective overlap of the measurement setup 
  $\{ \rho_A, \bX, \bY \}$ is defined as
  \begin{align*}
    c^*(\rho_A, \bX, \bY) := \inf_{ U,\, \bX', \bY', \bK' }
    \Bigg\{ \sum_k \tr\,( P_{A'}^k U \rho_{A}^{\phantom{k}} U^{\dagger})\, 
     \max_x \bigg\| \sum_y P_{A'}^k N_{A'}^y P_{A'}^k \cdot
     P_{A'}^k M_{A'}^{x\phantom{y}\!\!} P_{A'}^k \cdot P_{A'}^k N_{A'}^y P_{A'}^k \bigg\| \Bigg\}
  \end{align*}
  where the infimum is taken over all embeddings $U$ from $A$ to an auxiliary space $A'$, all measurements $\bX' = \{ M_{A'}^x \}$ and $\bY' = \{ N_{A'}^y \}$ on $A'$, and all projective measurements $\bK' = \{ P_{A'}^k \}$ on $A'$ such that
  $\sum_k U^{\dagger} P_{A'}^k M_{A'}^x P_{A'}^k U = M_A^x$ and 
  $\sum_k U^{\dagger} P_{A'}^k N_{A'}^y P_{A'}^k U = N_A^y$ for all $x$ and $y$.
\end{definition} 

Note that while evaluating the effective overlap for a general measurement setup might be intractable, it is often easy to find upper bounds on it. To see this, consider the following example, where the effective overlap leads to a tighter
characterization of the uncertainty. 

We apply one of two projective
measurements, either in the basis $\{ \ket{0}, \ket{1}, \ket{\perp} \}$ or in
the basis $\{\ket{+}, \ket{-}, \ket{\perp} \}$, where $\ket{\pm} = (\ket{0} \pm \ket{1})/\sqrt{2}$.
These measurements are applied 
on a state $\rho$ which has the property that
`$\perp$' is measured with probability at most $\eps$. 
The uncertainty relation~\eqref{eq:ucrfull} gives a trivial bound as the overlap 
of the two bases is $c = 1$. Still, our
intuitive understanding is that the uncertainty about
the measurement outcome is high as long as $\eps$ is small. In fact, it is easy to verify that the 
effective overlap of this setup
satisfies $c^* \leq (1 - \eps) \frac{1}{2} + \eps$ and thus captures this intuition.
(This formula can be interpreted as follows: with probability $1-\eps$ we are in the subspace spanned by $\ket{0}$ and $\ket{1}$, where the overlap is $\frac{1}{2}$, and with probability $\eps$ we measure $\perp$ and have maximal overlap.)

Our first result is a generalization of the uncertainty
relations~\eqref{eq:ucrnoside} and~\eqref{eq:ucr}. 
We show that these relations still hold when the overlap is replaced 
by the effective overlap. 

\begin{theorem}\label{th:ucr}
  Let $\rho_{ABC}$ be a tripartite quantum state and 
  $\bX = \{ M_A^x \}$ and $\bY = \{ N_A^y \}$ two measurements on $A$. Then,
  the states $\rho_{XB} = \cM_{\bX}[\rho_{AB}]$ and 
  $\rho_{YC} = \cM_{\bY}[\rho_{AC}]$ satisfy
  \begin{align*} 
    H(X)_{\rho} + H(Y)_{\rho} &\geq H(X|B)_{\rho} + H(Y|C)_{\rho} \geq -\log_2 {c^*(\rho_A, \bX, \bY)}\,. 
  \end{align*}
\end{theorem}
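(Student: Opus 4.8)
The plan is to split the statement into its two inequalities. The first, $H(X)_\rho + H(Y)_\rho \ge H(X|B)_\rho + H(Y|C)_\rho$, is immediate: it is just the statement that conditioning cannot increase entropy, $H(X|B)_\rho \le H(X)_\rho$ and $H(Y|C)_\rho \le H(Y)_\rho$, which the excerpt already records as a consequence of strong sub-additivity. The real work lies in the second inequality $H(X|B)_\rho + H(Y|C)_\rho \ge -\log_2 c^*(\rho_A,\bX,\bY)$. Since $c^*$ is defined as an infimum over admissible quadruples $(U,\bX',\bY',\bK')$, it suffices to prove, for each such quadruple, that $H(X|B)_\rho + H(Y|C)_\rho \ge -\log_2 f$, where $f = \sum_k \tr(P_{A'}^k U \rho_A U^{\dagger})\,\max_x \| \cdots \|$ is the corresponding bracketed expression; taking the supremum of the right-hand side over all quadruples then yields exactly $-\log_2 c^*$, because $\sup(-\log_2 f) = -\log_2 \inf f$.

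Fix an admissible quadruple. First I would invoke the dilation property stated before the theorem: setting $\tilde M^x = \sum_k P_{A'}^k M_{A'}^x P_{A'}^k$, the admissibility constraint $\sum_k U^{\dagger} P_{A'}^k M_{A'}^x P_{A'}^k U = M_A^x$ says precisely that $\{U, \{\tilde M^x\}\}$ is a dilation of $\bX$ (and analogously for $\bY$). Hence the post-measurement states $\rho_{XB}$ and $\rho_{YC}$, and therefore the entropies $H(X|B)_\rho$ and $H(Y|C)_\rho$, are unchanged if the measurements are instead performed on the embedded state $U \rho_{ABC} U^{\dagger}$, with the projective slicing $\bK'$ applied first and its outcome recorded in an auxiliary classical register $K$. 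Because tracing out $K$ corresponds to the non-selective pinching $\sum_k P_{A'}^k (\cdot) P_{A'}^k$, the marginals on $XB$ and $YC$ are exactly $\rho_{XB}$ and $\rho_{YC}$; and since appending a system to the conditioning cannot increase conditional entropy, $H(X|B)_\rho \ge H(X|BK)$ and $H(Y|C)_\rho \ge H(Y|CK)$.

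Next I would decompose along the slices. As $K$ is classical with $\Pr[k] = p_k = \tr(P_{A'}^k U \rho_A U^{\dagger})$, the chain rule gives $H(X|BK) = \sum_k p_k\, H(X|B)_{\omega^k}$ and $H(Y|CK) = \sum_k p_k\, H(Y|C)_{\omega^k}$, where $\omega^k = P_{A'}^k U \rho_{ABC} U^{\dagger} P_{A'}^k / p_k$ is the normalized state conditioned on the $k$-th slice and the effective measurements on that block are $\{ P_{A'}^k M_{A'}^x P_{A'}^k \}$ and $\{ P_{A'}^k N_{A'}^y P_{A'}^k \}$. Applying a within-slice (tripartite) uncertainty relation with overlap $\tilde c_k := \max_x \| \sum_y P_{A'}^k N_{A'}^y P_{A'}^k\, P_{A'}^k M_{A'}^x P_{A'}^k\, P_{A'}^k N_{A'}^y P_{A'}^k \|$ yields $H(X|B)_{\omega^k} + H(Y|C)_{\omega^k} \ge -\log_2 \tilde c_k$, and concavity of $t \mapsto -\log_2 t$ (Jensen's inequality) turns the weighted average into $\sum_k p_k(-\log_2 \tilde c_k) \ge -\log_2 \sum_k p_k \tilde c_k = -\log_2 f$. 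Chaining these steps proves the bound for the fixed quadruple, and the infimum over quadruples completes the argument.

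The routine ingredients are the entropic manipulations: data processing, the classical chain rule, and Jensen's inequality. The genuine obstacle is the within-slice uncertainty relation with the specific overlap $\max_x \| \sum_y N^y M^x N^y \|$. This is \emph{not} a corollary of relation~\eqref{eq:ucrfull}: already for scalar POVMs one checks that $\max_x \| \sum_y N^y M^x N^y \|$ can be strictly smaller than $\max_{x,y} \| \sqrt{M^x} N^y \sqrt{M^x} \|$, so the per-slice bound I need is strictly stronger than what~\eqref{eq:ucrfull} supplies and must be established on its own, presumably through an operator-norm or R\'enyi-entropy estimate of the kind deferred to the appendix. A secondary point requiring care is the bookkeeping that guarantees the dilation together with the non-selective slicing leaves the marginals $\rho_{XB}$ and $\rho_{YC}$\,|\,and hence the theorem's entropies\,|\,exactly invariant.
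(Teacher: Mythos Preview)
Your outline is structurally sound, and you correctly isolate the crux: the per-slice inequality $H(X|B)_{\omega^k} + H(Y|C)_{\omega^k} \ge -\log_2 \tilde c_k$ with $\tilde c_k = \max_x \bigl\| \sum_y N_k^y M_k^x N_k^y \bigr\|$ is strictly stronger than what~\eqref{eq:ucrfull} supplies, and it does not follow from it. That diagnosis is accurate.

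The paper, however, does not fill this gap the way you anticipate (a direct von Neumann or R\'enyi estimate on each slice, followed by Jensen). Instead it routes through smooth entropies and the i.i.d.\ limit. Proposition~\ref{pr:ucr} establishes, for smooth min- and max-entropies and \emph{globally} rather than slice by slice, the bound
\[
H_{\min}^{\eps+2\bar\eps}(X|BK)_{\rho} + H_{\max}^{\eps}(Y|CK)_{\rho} \ge -\log_2 c_{\bK}^*(\rho_A,\bX,\bY) - \log_2(2/\bar\eps^2),
\]
via a relative-min-entropy argument combined with the operator inequality of Lemma~\ref{lm:pt-norm-bound}; this is where the refined overlap $\max_x\bigl\|\sum_y\cdots\bigr\|$ actually emerges. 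Tracing out $K$ then gives Theorem~\ref{th:ucrmin}. Theorem~\ref{th:ucr} is finally deduced by applying Theorem~\ref{th:ucrmin} to $\rho_{ABC}^{\otimes n}$, using $c^*(\rho_A^{\otimes n},\bX^n,\bY^n)\le c^*(\rho_A,\bX,\bY)^n$, dividing by $n$, and invoking the asymptotic equipartition property~\eqref{eq:aep} to convert smooth entropies into von Neumann entropies in the limit $n\to\infty$.

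So the paper never writes down a per-slice von Neumann lemma; the slicing and the overlap refinement are handled in one stroke inside the smooth-entropy proof, and the passage to von Neumann entropy happens only at the very end. Your direct route could be completed by first proving a von Neumann analogue of Proposition~\ref{pr:ucr} (a relative-entropy monotonicity argument parallel to the paper's would suffice), but the paper opts for the smooth-entropy detour because Theorem~\ref{th:ucrmin} is of independent interest for the one-shot cryptographic applications discussed in the text.
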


The proof of this theorem employes the \emph{smooth entropy framework}~\cite{renner05,tomamichel09,mythesis}, which
has already found many applications in quantum cryptography and non-asymptotic information theory.
In the process, we also generalize an entropic uncertainty relation for smooth 
entropies~\cite{tomamichel11}.

Let us thus explain in more detail why an uncertainty relation in 
terms of smooth min-and max-entropy is desirable. 
The von Neumann entropy used above (and its classical analogue, the 
Shannon entropy) characterizes  
information theoretic tasks in the asymptotic limit 
of many independent repetitions. In practice, one can 
neither perform an infinite number of repetitions 
of an experiment, nor are the different runs usually 
independent of each other. In the setting where we would 
like to characterize the resources related to a 
task which is repeated only once, called the \emph{one-shot} setting, 
the smooth min- and max-entropies often take the role
of von Neumann entropy. In order for them to be applicable to the analysis of 
realistic protocols it is therefore 
crucial to develop uncertainty relations in terms 
of smooth entropies. 

The smooth entropies can be interpreted as operational 
quantities in the following sense. On the one hand, the smooth 
min-entropy, $H_{\min}^\eps(X|B)$, quantifies the maximal 
number of uniformly random bits, independent of quantum 
side information $B$, that can be extracted from $X$~\cite{renner05,tomamichel10}. 
This quantity is of particular importance in cryptography, 
were the task often involves extracting randomness that 
is secret from a quantum adversary. On the other hand, 
the smooth max-entropy, $H_{\max}^{\eps}(Y|C)$, quantifies 
the minimum number of additional bits of information about $Y$ that
are needed to reconstruct $Y$ from a quantum memory $C$~\cite{renesrenner10}.
In both cases, the \emph{smoothing parameter}, $\eps$, 
ensures the quality of the resulting state, i.e.\ it has 
to be indistinguishable from a perfect output up to 
probability $\eps$.

The following relation is thus of independent interest and shows that the uncertainty relation
for smooth entropies in~\cite{tomamichel11} also holds for the effective overlap.

\begin{theorem}\label{th:ucrmin}
  Let $\rho_{ABC}$ be a tripartite quantum state, $\eps \geq 0$, $\bar{\eps} > 0$ and let 
  $\bX = \{ M_A^x \}$ and $\bY = \{ N_A^y \}$ two POVMs on $A$. Then,
  the states $\rho_{XB} = \cM_{\bX}[\rho_{AB}]$ and 
  $\rho_{YC} = \cM_{\bY}[\rho_{AC}]$ and the smooth min- and max-entropies as 
  defined in Appendix~\ref{sc:proofucr} satisfy
\begin{align*}
  H_{\min}^{\eps+2\bar{\eps}}(X|B)_{\rho} + H_{\max}^{\eps}(Y|C)_{\rho} \geq - \log_2 c^*(\rho_A, \bX, \bY) - \log_2\, (2/\bar{\eps}^2) \,.
\end{align*}
\end{theorem}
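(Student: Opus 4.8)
The plan is to use the two observations motivating the effective overlap to reduce Theorem~\ref{th:ucrmin} to a single \emph{state-dependent} smooth uncertainty relation, and then to read off the effective overlap from the block structure induced by the slicing. The left-hand side depends on $\rho_{ABC}$ and the measurements only through the post-measurement states $\rho_{XB}$ and $\rho_{YC}$. Fix any admissible tuple $(U,\bX',\bY',\bK')$ from the infimum defining $c^*$, with $\bX',\bY'$ made projective by Neumark's dilation theorem. By dilation invariance and, since the constraints $\sum_k U^{\dagger} P_{A'}^k M_{A'}^x P_{A'}^k U = M_A^x$ and $\sum_k U^{\dagger} P_{A'}^k N_{A'}^y P_{A'}^k U = N_A^y$ guarantee that the slicing leaves the measurement statistics unchanged, both post-measurement states are reproduced by first forming the sliced, embedded state $\bar\sigma_{A'BC} := \sum_k P_{A'}^k\, U \rho_{ABC} U^{\dagger}\, P_{A'}^k$ and then applying $\cM_{\bX'}$ resp.\ $\cM_{\bY'}$. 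It therefore suffices to prove the inequality with $\rho$ replaced by $\bar\sigma$ and $c^*$ replaced by the bracketed quantity attached to the tuple; since the left-hand side is independent of the tuple, optimizing the right-hand side over admissible tuples (i.e.\ taking the infimum in $c^*$) then yields the theorem.

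The core is a smooth uncertainty relation that generalizes the relation of \cite{tomamichel11} by keeping its bound \emph{state-dependent} rather than relaxing it to a worst-case operator norm. Writing $\Theta^x := \sum_y N_{A'}^y M_{A'}^x N_{A'}^y$ for the overlap operators of the projective measurements on $A'$, I aim to prove that $\cM_{\bX'}[\bar\sigma_{A'B}]$ and $\cM_{\bY'}[\bar\sigma_{A'C}]$ satisfy $H_{\min}^{\eps+2\bar\eps}(X|B) + H_{\max}^{\eps}(Y|C) \geq -\log_2\big(\max_x \tr(\bar\sigma_{A'}\Theta^x)\big) - \log_2(2/\bar\eps^2)$. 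Following the argument of \cite{tomamichel11}, I would (i) use entropic duality to rewrite $H_{\max}^{\eps}(Y|C)$ as minus a smooth min-entropy of $Y$ against a system purifying the $\bY'$-measurement; (ii) establish the underlying operator inequality bounding the guessing probability $2^{-H_{\min}(X|B)}$ by $\big(\max_x \tr(\bar\sigma_{A'}\Theta^x)\big)\, 2^{H_{\max}(Y|C)}$, where the maximum over $x$ is exactly what the guessing step produces and replacing $\max_x \tr(\bar\sigma_{A'}\Theta^x)$ by $\max_x\|\Theta^x\|$ recovers the scalar overlap of~\eqref{eq:ucrfull}; and (iii) pass to smooth entropies, where the $\eps$-smoothing of the max-entropy amounts to running the argument on a state $\eps$-close to $\bar\sigma$, the collision (Rényi-$2$) entropy of $X$ given $B$ for that state is related to $H_{\min}^{\eps+2\bar\eps}(X|B)$ by smoothing over an additional $2\bar\eps$, and the standard collision-to-min bound contributes the additive $\log_2(2/\bar\eps^2)$ while the purified-distance triangle inequality combines the two smoothing radii into $\eps+2\bar\eps$.

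It then remains to evaluate the state-dependent overlap on the sliced state. The marginal $\bar\sigma_{A'} = \sum_k P_{A'}^k U \rho_A U^{\dagger} P_{A'}^k$ is block-diagonal with respect to the slicing projectors, with block weights $q_k := \tr(P_{A'}^k U \rho_A U^{\dagger})$. Since the slicing commutes with both measurements on the support of $\rho_A$, each $\Theta^x$ is block-diagonal as well, so that $\tr(\bar\sigma_{A'}\Theta^x) = \sum_k \tr\big(P_{A'}^k \bar\sigma_{A'} P_{A'}^k\, P_{A'}^k \Theta^x P_{A'}^k\big) \leq \sum_k q_k \|P_{A'}^k \Theta^x P_{A'}^k\|$, and under the commuting hypothesis $\|P_{A'}^k \Theta^x P_{A'}^k\|$ equals $\|\sum_y P_{A'}^k N_{A'}^y P_{A'}^k\, P_{A'}^k M_{A'}^x P_{A'}^k\, P_{A'}^k N_{A'}^y P_{A'}^k\|$. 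Taking the maximum over $x$ and using $\max_x \sum_k(\cdots) \leq \sum_k \max_x(\cdots)$ gives $\max_x \tr(\bar\sigma_{A'}\Theta^x) \leq \sum_k q_k c_k$, where $c_k$ is precisely the per-slice norm of the definition. Hence the core relation yields the claimed bound with $\sum_k q_k c_k$ in place of $c^*$ for this tuple, and taking the infimum over admissible tuples replaces $\sum_k q_k c_k$ by $c^*$.

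I expect the main obstacle to be the core state-dependent smooth relation of the second paragraph, specifically carrying step (ii) out at the operator level so that the bound reads $\max_x \tr(\bar\sigma_{A'}\Theta^x)$ rather than the worst-case $\max_x\|\Theta^x\|$: retaining the state-dependent trace is exactly what later lets the slice average $\sum_k q_k c_k$ improve on the worst-case $\max_k c_k$. The smoothing bookkeeping in step (iii)—producing precisely $\log_2(2/\bar\eps^2)$ together with the $2\bar\eps$ slack—and the verification that the commuting-on-support property of the slicing makes the pinched overlap $\|P_{A'}^k \Theta^x P_{A'}^k\|$ coincide with the fully compressed expression $c_k$ appearing in the definition are the remaining points that require care.
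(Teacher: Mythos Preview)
Your reduction in the first paragraph (fixing an admissible tuple, dilating to projective $\bX',\bY'$, and noting that the LHS depends only on the post-measurement states) is fine and matches the paper. The divergence is in how you extract the averaged overlap $\sum_k q_k c_k$.

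The paper does \emph{not} go through a state-dependent uncertainty relation of the form $H_{\min}^{\eps+2\bar\eps}(X|B)+H_{\max}^{\eps}(Y|C)\geq -\log_2\max_x\tr(\bar\sigma_{A'}\Theta^x)-\log_2(2/\bar\eps^2)$. Instead it keeps the slicing outcome $K$ as an explicit classical register in the conditioning: Proposition~\ref{pr:ucr} establishes
\[
H_{\min}^{\eps+2\bar\eps}(X|BK)_{\rho}+H_{\max}^{\eps}(Y|CK)_{\rho}\geq -\log_2\Big(\sum_k q_k c_k\Big)-\log_2(2/\bar\eps^2),
\]
and only afterwards discards $K$ via the data-processing inequality. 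The averaging over $k$ is obtained directly inside the operator chain: because the $K'$ register is kept, the relevant marginal $\rho_{AY'K'B}$ is block-diagonal in $k$, and Lemma~\ref{lm:pt-norm-bound} is applied \emph{blockwise} to produce $\omegat_{KB}=\sum_k\proj{k}\otimes c_k\,\tr_A(P^k\rho_{AB})$ with $\tr\omegat_{KB}=\sum_k q_k c_k$.

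This is precisely the point where your step~(ii) runs into trouble. The ``argument of~\cite{tomamichel11}'' you invoke is exactly Lemma~\ref{lm:pt-norm-bound}, and that lemma outputs an \emph{operator norm}: applied without the $K$-grading it yields $\tr_A\big(\sum_y\sqrt{M^x}N^y\rho_{AB}N^y\sqrt{M^x}\big)\preceq\|\Theta^x\|\,\rho_B$, not anything involving $\tr(\bar\sigma_{A'}\Theta^x)$. There is no step in the standard proof where ``the guessing step produces'' $\max_x\tr(\bar\sigma_{A'}\Theta^x)$; the maximum over $x$ arises only after one has already passed to $\|\Theta^x\|\rho_B$ and then dominates by $\max_x\|\Theta^x\|$. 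So your claimed state-dependent min/max relation is a genuinely new statement, not a rephrasing of~\cite{tomamichel11}, and you give no mechanism to prove it. A second, related difficulty is smoothing: in step~(iii) you replace $\bar\sigma$ by an $\eps$-close state to handle $H_{\max}^{\eps}$, but then the marginal on $A'$ also moves, so the state-dependent quantity $\max_x\tr(\cdot\,\Theta^x)$ on the RHS is no longer evaluated on $\bar\sigma_{A'}$; the paper avoids this because its bound $\sum_k q_k c_k$ is fixed by the \emph{original} $\rho_A$ and the operator norms $c_k$, which are untouched by smoothing (cf.\ Lemma~\ref{lm:rel-smooth-bound}, which is engineered so that the second argument of $h_{\min}$ stays $\id\otimes\rho_B$).

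In short, the missing idea in your sketch is exactly what the paper supplies: retain $K$ in the conditioning so that the standard operator-norm bound can be applied block by block, producing the weighted average $\sum_k q_k c_k$ directly, and only then trace out $K$.
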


The relation in the above form directly leads to a formal 
security proof of quantum key distribution (QKD) against general adversaries 
while at the same time making it more robust against device imperfections, 
in analogy with~\cite{tomamichel11,tomamichellim11}. 
To see how this works, consider the entanglement based version of the
Bennett-Brassard 1984 protocol~\cite{bb84,bennett92} 
and $n$ measurements in the computational and diagonal basis such that $- \log_2 c^* = n$.
The uncertainty relation is now applied to the situation where Alice and Bob would 
like to agree on a key, while Charlie takes the role of the eavesdropper. 
Using the operational meaning of the smooth 
entropies as described above, the uncertainty relation states 
that the number of secret bits extractable from a raw 
string $Y^n$ is given by $n$ minus the number of additional 
bits from Alice required for Bob to correct phase errors (i.e.\ the errors in $X^n$).
The latter number, however, can be inferred by Alice and Bob from experimental data, 
and thus the security of the extracted key can be ensured by them without making
any assumptions about the eavesdropper's attack.

The detailed proofs of Theorem~\ref{th:ucr} and~\ref{th:ucrmin} can be found in Appendix~\ref{sc:proofucr}.

\subsection{Relation between Overlap and Nonlocality}

We now consider four POVM measurements with \emph{binary} outcomes, 
$\bX$ and $\bY$ on Alice's side as well as $\bR$ and $\bS$ on David's side.
We first define the \emph{{CHSH} value} of  a \emph{bipartite measurement setup}.

\begin{definition}
  Let $\rho_{AD}$ be a bipartite state and let $\bX = \{ M_A^0, M_A^1 \}$,
  $\bY = \{ N_A^0, N_A^1 \}$ be measurements on $A$ and $\bR = \{ R_D^0, R_D^1 \}$,
  $\bS = \{ S_D^0, S_D^1 \}$ be measurements on $D$. Then, the {CHSH} value of 
  the bipartite measurement setup $\{\rho_{AD}, \bX, \bY, \bR, \bS\}$ is defined as
  \begin{align*}
    \beta(\rho_{AD}, \bX, \bY, \bR, \bS) := 2 
    \tr \bigg( \sum_{i = 0}^{1} \big(M_A^i \otimes (R_D^i + S_D^i) + N_A^i 
      \otimes (R_D^i + S_D^{1-i}) \big)\, \rho_{AD} \bigg) - 4 \,.
  \end{align*}
\end{definition}
\noindent Note that the trace term corresponds to $\Pr[X = R] + \Pr[Y = R] + \Pr[X = S] + \Pr[Y \neq S]$ in~\eqref{eq:beta} evaluated for the state $\rho_{AD}$ and the four specified POVMs.

The main result of this paper shows a relation between the 
effective overlap of Alice's measurement setup and $\beta$, the maximal {CHSH} value
that can be reached between Alice and an arbitrary additional party, David, with the same measurement
setup on Alice's side. (Alice's measurement setup is given by the marginal state on $A$ as well as
the two possible POVMs she can choose from.)

\begin{theorem}
  \label{th:maxoverlap}
  Let $\rho_A$ be a state and let $\bX$, $\bY$ be binary measurements such that
  $c^* = c^*( \rho_A, \bX, \bY \}$. Then, for any $\rho_{AD}$
  with $\tr_{D} (\rho_{AD}) = \rho_A$ and 
  any two binary measurements $\bR$, $\bS$ on $D$, we have
  \begin{align}
    \beta(\rho_{AD}, \bX, \bY, \bR, \bS) 
      \leq 2 \big(\sqrt{c^*}+\sqrt{1-c^*}\big) . \label{eq:overlapupperbound}
  \end{align}
  Conversely, for any bipartite state $\rho_{AD}$ and any binary measurements $\bX$, $\bY$ 
  on $A$ and $\bR$, $\bS$ on $D$ such that
  $\beta = \beta(\rho_{AD}, \bX, \bY, \bR, \bS)$, we have
  \begin{align}
    \label{eq:overlaplowerbound} 
    c^*(\rho_A, \bX, \bY) \leq \frac{1}{2} + \frac{\beta}{8} \sqrt{8-\beta^2}\;.
  \end{align}
\end{theorem}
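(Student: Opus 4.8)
\section*{Proof proposal for Theorem~\ref{th:maxoverlap}}

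The plan is to prove the first inequality~\eqref{eq:overlapupperbound} and then obtain~\eqref{eq:overlaplowerbound} by algebraic inversion. First I would rewrite the CHSH value in terms of the $\pm 1$ observables $\hat X = M_A^0 - M_A^1$ and $\hat Y = N_A^0 - N_A^1$ on $A$ and $\hat R = R_D^0 - R_D^1$, $\hat S = S_D^0 - S_D^1$ on $D$, all of operator norm at most $1$. Using $M_A^0 = (\id + \hat X)/2$ and the analogous identities, the trace term collapses into correlators and yields $\beta = \tr(\mathcal B\, \rho_{AD})$ with the CHSH operator $\mathcal B = (\hat X + \hat Y)\otimes \hat R + (\hat X - \hat Y)\otimes \hat S$. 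Since the probabilities defining $\beta$ depend only on the POVM elements, $\beta$ is invariant under passing to a Neumark dilation $U, \bX', \bY'$ of Alice's measurements; I would therefore assume from now on that $\bX', \bY'$ are projective, so that $\hat X', \hat Y'$ are genuine $\pm 1$ observables.

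Next I would choose the slicing $\bK' = \{ P_{A'}^k \}$ in the definition of the effective overlap to be the Jordan decomposition of the pair $(\hat X', \hat Y')$: by Jordan's lemma two $\pm 1$ observables can be simultaneously block-diagonalised into blocks of dimension at most two, and the projectors onto these blocks form a projective measurement commuting with both $\bX'$ and $\bY'$, hence an admissible choice in the infimum defining $c^*$. Because $\bK'$ commutes with the measurements, slicing does not change the post-measurement states, so $\beta = \sum_k p_k\, \beta_k$ with $p_k = \tr(P_{A'}^k U \rho_A U^\dagger)$ and $\beta_k = \tr(\mathcal B\, \rho^{(k)})$ the CHSH value of the normalised block state $\rho^{(k)} = P_{A'}^k \rho'_{A'D} P_{A'}^k / p_k$. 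Writing $c_k$ for the overlap of block $k$ appearing in the definition, this particular slicing gives $c^* \le \sum_k p_k c_k =: \bar c$.

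The computational core is the per-block qubit bound. In a two-dimensional block I would write $\hat X', \hat Y'$ as Bloch observables symmetric about a common axis, so that $\hat X' + \hat Y'$ and $\hat X' - \hat Y'$ anticommute with norms $2\cos\alpha_k$ and $2\sin\alpha_k$ (taking $\alpha_k \in [0,\pi/4]$). Squaring $\mathcal B$ and using $\|\hat R\|, \|\hat S\| \le 1$ and $\|[\hat R, \hat S]\| \le 2$ gives the operator inequality $\|\mathcal B_k\| \le 2\sqrt{1 + \sin 2\alpha_k}$, valid for every state and every choice of David's observables. A direct evaluation of the effective-overlap expression in the block identifies $c_k = \max(\cos^2\alpha_k, \sin^2\alpha_k) \ge \tfrac12$, and one verifies $2\big(\sqrt{c_k} + \sqrt{1-c_k}\big) = 2\sqrt{1 + \sin 2\alpha_k}$; one-dimensional blocks trivially give $c_k = 1$ and $\beta_k \le 2$. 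Hence $\beta_k \le 2\big(\sqrt{c_k} + \sqrt{1-c_k}\big)$ in every block.

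Finally I would combine the blocks. Let $f(c) = 2(\sqrt c + \sqrt{1-c})$, which is concave on $[0,1]$ and decreasing on $[\tfrac12, 1]$, and note that $c_k \ge \tfrac12$ forces $\bar c \ge c^* \ge \tfrac12$. Then $\beta = \sum_k p_k \beta_k \le \sum_k p_k f(c_k) \le f(\bar c) \le f(c^*)$, where the middle step is Jensen's inequality and the last uses $\bar c \ge c^*$ with the monotonicity of $f$ on $[\tfrac12, 1]$; this is exactly~\eqref{eq:overlapupperbound}. Inequality~\eqref{eq:overlaplowerbound} follows by inversion: on $[\tfrac12, 1]$ the map $f$ is a decreasing bijection onto $[2, 2\sqrt2]$ with inverse $f^{-1}(\beta) = \tfrac12 + \tfrac{\beta}{8}\sqrt{8 - \beta^2}$, so $\beta \le f(c^*)$ yields $c^* \le f^{-1}(\beta)$. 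The main obstacle I expect is the bookkeeping of the dilation-and-slicing reduction: one must check that $\beta$ is genuinely preserved, that the Jordan slicing is admissible so that $c^* \le \bar c$, and---since $f$ is not globally monotone---that $c^* \ge \tfrac12$, which is what legitimises the final monotonicity step.
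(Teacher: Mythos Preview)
Your proposal shares the paper's overall architecture---Neumark dilation, Jordan's lemma to reduce to blocks of dimension at most two, a per-block CHSH bound, and a convex recombination---but the per-block step is carried out quite differently. The paper obtains $\beta_{\max}(\gamma^*)=2(\sqrt{\gamma^*}+\sqrt{1-\gamma^*})$ by passing through a generalised Tsirelson vector representation (Lemmas~\ref{lem:tsirelsonext} and~\ref{lem:tsirelsonext2}) and solving the resulting semidefinite program (Lemma~\ref{lem:horodecki}), and then needs an additional relabelling-symmetry argument (the modified observables $\tilde O^{\bX},\tilde O^{\bY}$ together with $k$-dependent permutations of David's inputs and outputs) to connect the block-wise overlaps to the achieved $\beta$. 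You replace all of this by the elementary observation that $\hat X'+\hat Y'$ and $\hat X'-\hat Y'$ anticommute in each Jordan block, so squaring the Bell operator gives $\|\mathcal B_k\|\le 2\sqrt{1+\sin 2\alpha_k}=f(c_k)$ directly; Jensen's inequality for the concave $f$ then does the recombination. This is a genuinely shorter route that bypasses both the SDP and the symmetry detour.

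On your flagged obstacle: you are right to be cautious about $c^*\ge\tfrac12$, and in fact it is \emph{not} automatic---for non-projective binary POVMs (e.g.\ the trivial $\{\id/2,\id/2\}$) the effective overlap can drop below $\tfrac12$---so the final monotonicity step $f(\bar c)\le f(c^*)$ is not justified in full generality. The paper's proof has the mirror-image issue: Proposition~\ref{pr:cstarbeta} establishes $c^*\le\bar c=\gamma^*$ with $\gamma^*\ge\tfrac12$ and $\beta\le\beta_{\max}(\gamma^*)$, and then tacitly uses the monotonicity of $g(\beta)=\tfrac12+\tfrac{\beta}{8}\sqrt{8-\beta^2}$ to conclude~\eqref{eq:overlaplowerbound}, which only holds on $[2,2\sqrt2]$. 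So the gap you identify is real but is shared with the paper; in the nonlocal regime $\beta\ge 2$ (the only case relevant to the applications) both arguments go through, since $\beta\le f(\bar c)$ with $\bar c\ge\tfrac12$ already yields $c^*\le\bar c\le g(\beta)$, and inversion then gives~\eqref{eq:overlapupperbound}.
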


This bound is depicted in Figure~\ref{fig:overlap} and implies as a special case 
that any state and measurement on Alice's part which can give rise to
nonlocal correlations 
(i.e., $\beta > 2$), must have effective overlap $c^* < 1$. Furthermore, in order
to reach a 
{CHSH} value close to Tsirelson's bound (i.e., $\beta \approx 2\sqrt{2}$),
the measurement on $A$ must have almost minimal overlap $c^* \approx 1/2$. 

\begin{figure}[ht!]
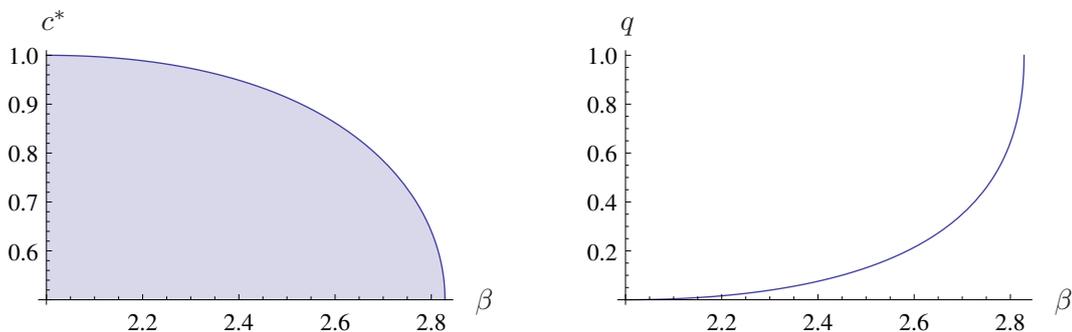

\psfragfig{bound}{%
    \psfrag{b}{$\beta$}
    \psfrag{c}{$c^*$}
}
\hspace{1cm}
\psfragfig*{bounducr}{%
    \psfrag{b}{$\beta$}
    \psfrag{c}{$q$}
}
\caption{\label{fig:overlap} The relation between local overlap and {CHSH} value. Due to our bound~\eqref{eq:overlaplowerbound}, combinations of $\beta$ and $c^*$ outside the filled region
in the left figure are impossible. The right figure shows the guaranteed uncertainty, $q = -\log_2 c^*$,
as a function of $\beta$.}
\end{figure}

Theorem~\ref{th:maxoverlap} in particular implies that if Alice and David can experimentally 
verify that the {CHSH} violation of their bipartite setup exceeds some fixed value $\beta$,
then the effective overlap of both Alice's and David's local measurements is upper bounded 
by~\eqref{eq:overlaplowerbound}. 

Finally, Equation~\eqref{eq:overlaplowerbound}, together with Theorem~\ref{th:ucr}, 
directly implies an 
uncertainty relation with quantum side information where the lower bound is stated 
in terms of the {CHSH} value the measurement setup can reach. 
This \emph{device-independent uncertainty relation} is stated 
only in terms of 
quantities which have an operational meaning. We have
\begin{align*}
    H(X)_{\rho} + H(Y)_{\rho} \geq 
  H(X|B)_{\rho} + H(Y|C)_{\rho} \geq 1-\log_2 \left(1+\frac{\beta}{4}\sqrt{8-\beta^2} \right) ,
\end{align*}
where $\beta$ is the {CHSH} value between $A$ and $D$, 
resulting from measuring any state $\rho_{AD}$ 
with $\tr_T (\rho_{AD}) = \rho_A$ using measurements $\bX$ and $\bY$ on $A$ 
and arbitrary measurements on $D$. The right-hand side of this inequality, i.e.\ the guaranteed uncertainty, is also depicted in Figure~\ref{fig:overlap}.

This implies, for example, that if Bob's uncertainty about Alice's outcome is low, 
but the {CHSH} value between Alice and Bob (who takes the role of David in this example) is high, 
then Charlie's uncertainty about the outcome of the other measurement must necessarily be 
high. Alice and Bob can therefore infer whether Charlie has high 
entropy from their correlations alone.

We want to stress again that previous uncertainty relations were stated in terms of the overlap,
which can only be determined if the exact specification of Alice's measurement devices
is known. Our uncertainty relation, on the other hand, depends \emph{only} on the observable quantity $\beta$ and is independent of the details of the theoretical model used to describe the quantum 
systems and measurements. This includes, in particular, the dimension of the Hilbert space they act on. 

We refer to Appendix~\ref{sc:overlap} for the proof of Theorem~\ref{th:maxoverlap}.

\section{Application: Certification of {BB84}-Sources}
\label{sc:app}

Theorem~\ref{th:maxoverlap} can be used 
to test the effective overlap in a device-independent way, i.e., 
where the test equipment does not need to be trusted. 
Such a test could, for example, be used by 
manufacturers  
to certify the quality of a source 
creating {BB84}-states~\cite{bb84} and to 
proof to a skeptical audience that their 
devices fulfill the desired specifications. 
Sources of {BB84}-states are 
widely used in quantum cryptography,
including quantum key distribution
and bit commitment or oblivious transfer secure in the 
bounded/noisy storage model~\cite{damgaard08,koenig09}.  
Moreover, recent security proofs for quantum key 
distribution~\cite{berta10,tomamichel11,tomamichellim11} are based on uncertainty relations of the form~\eqref{eq:ucr}. The overlap of the source enters there as the crucial parameter determining the secrecy of the resulting key\,|\,in particular, there is no need to do 
tomography of the produced states. For this reason, the overlap can be regarded as 
the key parameter quantifying the quality of sources of {BB84}-states. 

\begin{figure}[h]
\centering
\includegraphics[width=10cm]{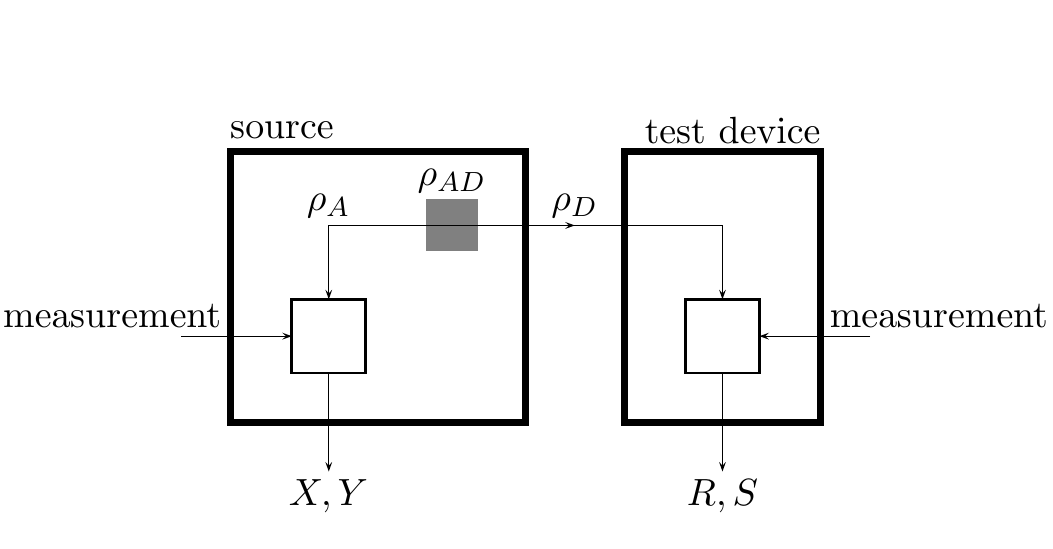}
\caption{\label{fig:source} Certification of entanglement-based sources of {BB84}-states.}
\end{figure}

Consider a (potentially imperfect) source that creates {BB84}-states in the following way (see Figure~\ref{fig:source}). First, it produces two entangled particles in a state $\rho_{AD}$, e.g.~through parametric 
down-conversion~\cite{shih88,kiess93}. Then, it emits one part, $D$, of the entangled quantum state and measures the other part, $A$, using one of two different measurements chosen at random. 
Denote the binary measurement outcome by $X$ or $Y$ depending on the input. The input of the source thus corresponds to the choice of basis 
for the {BB84}-states, and, together with the output, defines which of the $4$ states was actually 
prepared. 
Sources of this type are the subject of recent research, e.g.~they are used as heralded 
single photon sources~\cite{pittman05,xiang10} and have applications in (device-independent) 
quantum cryptography~\cite{gisin10,curty11,pitkanen11}. 

A source which repeatedly and independently prepares states in this way  
can be certified by a test device 
which measures the emitted particle $D$ 
in one of two bases chosen at random  and outputs the 
measurement result, denoted by $R$ or $S$ depending on the input. 
The effective overlap of the source can then be estimated from the fraction $p=k/N$ of times the {CHSH} condition is satisfied (i.e., either $X=R$, $X=S$, $Y=R$ or $Y \neq S$), as
\begin{align*}
c^* \approx \frac{1}{2} + 2\left(2p-1\right) \sqrt{\frac{1}{2}-\left(2p-1\right)^2} \,.
\end{align*}
The precise evaluation of the statistics is straightforward but beyond the scope of this work.

\section{Conclusion}
\label{sc:conc}

We have found a novel relation between the local uncertainty of measurement outcomes (expressed in terms of the von Neumann or smooth min- and max-entropy) and nonlocality (expressed in terms of the CHSH value). This relation provides analytical bounds on the unpredictability of local measurement outcomes and opens a new avenue for device-independent quantum cryptography. Namely, it enhances the cryptographic applications of the entropic uncertainty relations since the crucial parameter, the effective overlap, can be tested experimentally.

Our result is limited to the CHSH Bell test and thus only considers binary measurements. Hence, a note of caution is advised here.
The {CHSH} value is naturally determined using measurements with binary outcomes. In practical experimental situations, however, often a third result occurs indicating that the measurement was unsuccessful.
There are different ways to deal with this situation. If we randomly or deterministically assign
one of the binary outcomes to this event, we stay in the framework of binary POVMs and the calculated $\beta$ indeed gives an upper bound on the effective overlap. If these unwanted results are simply discarded, however, we open the so-called post-selection loophole and our result does not apply without further analysis.

It remains an open question whether other Bell tests can be employed to bound the effective overlap of measurements with more
than two outcomes.

\section*{Acknowledgements}

We thank Micha\l{} Horodecki, Charles Ci Wen Lim, Corsin Pfister, Renato Renner, 
L{\'i}dia del Rio, Stephanie Wehner, 
Severin Winkler for helpful comments and discussions. 
EH and MT acknowledge support from the National Research Foundation (Singapore), and the
Ministry of Education (Singapore). MT is also supported by the Swiss National
Science Foundation through the National Centre of Competence in Research
`Quantum Science and Technology'.

\bibliographystyle{arxiv}
\bibliography{library}

\newpage
\appendix

\section{Proof of Generalized Uncertainty Relations}
\label{sc:proofucr}

\subsection{Preliminaries}

For the proof, we need two conditional entropies that are generalizations of the von Neumann entropy, the smooth min- and max-entropy. In order to define these, we first need to introduce the concept of sub-normalized quantum states and the purified distance. A sub-normalized quantum state is a positive semidefinite operator $\rho$ with $0 < \tr(\rho) \leq 1$ on a Hilbert space. 

The \emph{purified distance}~\cite{tomamichel09} between two sub-normalized quantum states, $\rho$ and $\tau$, is given by $P(\rho, \tau) := \sqrt{1 - F^2(\rho, \tau)}$, where $F(\rho, \tau) := \tr{\abs{\sqrt{\rho}\sqrt{\sigma}}} + \sqrt{(1-\tr{\rho})(1-\tr{\tau})}$ is the generalized fidelity. We say that the two states are $\eps$-close, denoted $\rho 
\approx^\eps \tau$, if and only if $P(\rho, \tau) \leq \eps$. The purified distance is a metric and has various important properties, e.g.\ $\rho \approx^\eps \tau \implies \mathcal{E}(\rho) \approx^\eps \mathcal{E}(\tau)$ for all trace non-increasing completely positive maps $\mathcal{E}$~\cite{tomamichel09}. 

Furthermore, due to Uhlmann's theorem, there exists an extension $\tau_{AB}$ of $\tau_A = \tr_B(\tau_{AB})$ such that $P(\rho_{AB}, \tau_{AB}) = P(\rho_A, \tau_A)$ for any bipartite state $\rho_{AB}$.
This state can be constructed (see~\cite{dupuis10}, Lemma B.2) and has the form 
\begin{align}
  \tau_{AB} = (X_A \otimes \id_B) \rho_{AB} (X_A^\dagger \otimes \id_B) \label{eq:pd-ext}
\end{align}
for some linear operator $X_A$ on $A$.
We use `$\succeq$' to denote the positive semidefinite partial order on Hermitian matrices, i.e.\
$A \succeq B$ if and only if $A - B$ is positive semidefinite.

\begin{definition}
  Let $\rho_{AB}$ be a sub-normalized state. The \emph{min-entropy} of $A$ given 
  $B$ is~\cite{renner05}
  \begin{align}
    \nonumber H_{\min}(A|B)_{\rho} &:=
    \max_{\sigma_B} \, \sup  \{ \lambda\in \bR : \rho_{AB} \preceq 2^{-\lambda}
    \mathds{1}_A\otimes \sigma_B  \}\,,
  \end{align}
  where the maximization is over all states $\sigma_B$ 
  on $B$.   For $\eps \geq 0$, the \emph{$\eps$-smooth min-entropy} and the
  \emph{$\eps$-smooth max-entropy} of $A$ given $B$ are defined as~\cite{koenig08,tomamichel09}
  \begin{align}
    \nonumber H^\eps_{\min}(A|B)_{\rho}
     := \max_{\rhot}\ H_{\min}(A|B)_{\rhot} 
    \qquad \text{and} \qquad
    H^\eps_{\max}(A|B)_{\rho}
     := - H^\eps_{\min}(A|C)_{\rho}
  \end{align}
  where the optimization is over all sub-normalized 
  states $\tilde{\rho}_{AB} \approx^\varepsilon \rho_{AB}$
  and $\rho_{ABC}$ is an arbitrary purification of $\rho_{AB}$.
\end{definition}
We note that in the limit of many independent copies of a quantum state, 
$\tau_{A^nB^n} = \rho_{AB}^{\otimes n}$, the smooth entropies converge to the von Neumann entropy~\cite{tomamichel08,mythesis}. For any $0 < \eps < 1$,
\begin{align}
  \lim_{n \to \infty} \frac{1}{n} 
    H_{\min}^\eps(A^n|B^n)_{\tau} = 
  \lim_{n \to \infty} \frac{1}{n} 
    H_{\max}^{\eps}(A^n|B^n)_{\tau} = H(A|B)_{\rho} \label{eq:aep}\,.
\end{align}

The smooth entropies satisfy various data-processing inequalities, in particular, for every CPTPM 
$\mathcal{E}$ from $B$ to $B'$, we have~\cite{tomamichel09}
\begin{align}
  H^\eps_{\min}(A|B)_{\rho} \leq H^\eps_{\min}(A|B')_{\tau} \quad \textrm{and} \quad 
  H^\eps_{\max}(A|B)_{\rho} \leq H^\eps_{\max}(A|B')_{\tau} \quad \textrm{for}\ \ \tau_{AB'} = \mathcal{E}[\rho_{AB}] \,. \label{eq:data-proc}
\end{align}

Finally, we need the following result. (See also~\cite{mythesis} for a slightly more general statement.)

\begin{lemma}
  \label{lm:pt-norm-bound}
  Let $M_{AB} \succeq 0$ and and let $\{ E_A^k \}_k$ be a set of linear operators on $A$.
  Then, 
  \begin{align}
    \tr_{A} \Big(\sum_k (E_A^k \otimes \id_B) M_{AB} ( E_A^k{}^{\dagger} \otimes \id_B) \Big) \preceq 
    \Big\| \sum_k E_A^k{}^{\dagger} E_A^k \Big\| \tr_A(M_{AB}) \,.
  \end{align}
\end{lemma}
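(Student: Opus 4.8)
The plan is to verify the operator inequality by testing it against arbitrary vectors on the $B$ system. Recall that for Hermitian operators $P, Q$ acting on $B$ one has $P \preceq Q$ precisely when $\brakket{\psi}{P}{\psi} \leq \brakket{\psi}{Q}{\psi}$ for every $\ket{\psi}_B$. Both sides of the claimed inequality are Hermitian (indeed the left-hand side is the partial trace of a positive operator, since $M_{AB} \succeq 0$ and conjugation by $E_A^k \otimes \id_B$ preserves positivity), so it suffices to fix an arbitrary $\ket{\psi}_B$ and compare the two resulting quadratic forms.

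First I would rewrite the quadratic form of the left-hand side as a single trace. Using the elementary identity $\brakket{\psi}{\tr_A(X_{AB})}{\psi} = \tr\big( (\id_A \otimes \proj{\psi}_B)\, X_{AB} \big)$ with $X_{AB} = \sum_k (E_A^k \otimes \id_B) M_{AB} (E_A^k{}^\dagger \otimes \id_B)$, and then invoking cyclicity of the trace to bring each pair $E_A^k{}^\dagger$ and $E_A^k$ together, the form collapses to
\begin{align*}
  \Big\langle \psi \Big| \tr_A \Big( \sum_k (E_A^k \otimes \id_B) M_{AB} (E_A^k{}^\dagger \otimes \id_B) \Big) \Big| \psi \Big\rangle = \tr\Big( \big( S_A \otimes \proj{\psi}_B \big) M_{AB} \Big),
\end{align*}
where $S_A := \sum_k E_A^k{}^\dagger E_A^k$. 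The point of this step is that the $B$ register now enters only through the fixed projector $\proj{\psi}_B$, so the whole sum over $k$ has gathered into the single positive operator $S_A$.

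The crucial step is then the operator-norm bound $S_A \preceq \| S_A \|\, \id_A$, which holds because $S_A \succeq 0$ and $\|S_A\|$ is its largest eigenvalue. Tensoring with the positive operator $\proj{\psi}_B$ preserves the semidefinite order, so $S_A \otimes \proj{\psi}_B \preceq \| S_A \|\, \id_A \otimes \proj{\psi}_B$; and since $\tr(P M_{AB}) \leq \tr(Q M_{AB})$ whenever $P \preceq Q$ and $M_{AB} \succeq 0$, I would conclude
\begin{align*}
  \tr\big( (S_A \otimes \proj{\psi}_B) M_{AB} \big) \leq \| S_A \|\, \tr\big( (\id_A \otimes \proj{\psi}_B) M_{AB} \big) = \| S_A \|\, \brakket{\psi}{\tr_A(M_{AB})}{\psi}.
\end{align*}
As $\ket{\psi}_B$ was arbitrary, chaining the two displays gives exactly the claimed operator inequality.

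I expect no serious obstacle: the lemma is essentially a repackaging of the variational characterization of the operator norm. The only point demanding a little care is the partial-trace bookkeeping in the first display\,|\,one must check that testing against $\proj{\psi}_B$ genuinely converts the partial trace into an ordinary trace in which cyclicity collapses $E_A^k M_{AB} E_A^k{}^\dagger$ into $(E_A^k{}^\dagger E_A^k)\, M_{AB}$ summed over $k$, after which the norm bound applies verbatim.
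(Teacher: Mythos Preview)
Your proof is correct and follows essentially the same route as the paper: both collapse the sum into $S_A = \sum_k E_A^k{}^\dagger E_A^k$ via cyclicity and then invoke $S_A \preceq \|S_A\|\,\id_A$. The only cosmetic difference is that the paper stays at the operator level throughout\,|\,writing $R_A = \|S_A\|\,\id_A - S_A \succeq 0$ and using $\tr_A\big((\sqrt{R_A}\otimes\id_B)\,M_{AB}\,(\sqrt{R_A}\otimes\id_B)\big)\succeq 0$ directly\,|\,whereas you first reduce to scalars by pairing with $\proj{\psi}_B$.
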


\begin{proof}
  Due to the linearity and cyclicity of the partial trace, we
  have
  \begin{align*}
    \tr_A \big(\sum_k (E_A^k \otimes \id_B) M_{AB} ( E_A^k{}^{\dagger} \otimes \id_B) \big) 
    = \tr_A \Big( \sum_k ( E_A^k{}^\dagger E_A^k \otimes \id_B)  M_{AB} \Big) 
  \end{align*}
  We introduce the operator $R_{A} = \id_A \big\| \sum_k E_A^k{}^{\dagger} E_A^k \big\| - 
  \sum_k E_A^k{}^{\dagger} E_A^k \succeq 0$. We note that $\tr_A \big( (\sqrt{R_A} \otimes \id_B) M_{AB} 
  (\sqrt{R_A} \otimes \id_B) \big) \succeq 0$ 
  and, thus,
  \begin{align*}
   \tr_A \Big( \sum_k ( E_B^k{}^\dagger E_B^k \otimes \id_B)  M_{AB} \Big) 
   &\preceq \tr_A \Big( \sum_k ( ( E_B^k{}^\dagger E_B^k + R_A ) \otimes \id_B)  M_{AB} \Big) \\
   &= \Big\| \sum_k E_A^k{}^{\dagger} E_A^k \Big\| \tr_A(M_{AB}) \,.
    \qedhere
  \end{align*}
\end{proof}

\subsection{Smooth Relative Entropy}

Our proof relies heavily on the following auxiliary quantity, related to the relative 
max-entropy~\cite{datta08}, 
$h_{\min}(\rho \| \sigma) := \sup \{ \lambda \in \bR : \rho \preceq 2^{-\lambda} \sigma \}$.
It is easy to see that this quantity is monotonic under the application of a quantum map, i.e.\ $h_{\min}(\mathcal{E}[\rho]\, \|\, \mathcal{E}[\sigma]) \geq
h_{\min}(\rho \| \sigma)$ for all CPTPMs $\mathcal{E}$.

The following lemma relates the min-entropy and the relative entropy of the state and its marginal. 
(We refer to~\cite{tomamichel10} for a proof.)

\begin{lemma}
  \label{lm:rel-min-bound}
  Let $\eps > 0$ and $\rho_{ABC}$ a pure quantum state. 
  Then, there exists a projector $\Pi_{AC}$ and a state 
  $\rhot_{ABC} = (\Pi_{AC} \otimes \id_B) \rho_{ABC} (\Pi_{AC} \otimes \id_B)$ 
  such that $\rhot_{ABC} \approx^\eps\! \rho_{ABC}$ and
  \begin{align}
    h_{\min}(\rhot_{AB} \| \id_A \otimes \rho_B) \geq H_{\min}(A|B)_{\rho} - 
    \log_2 \, (2/\eps^2) \nonumber.
  \end{align}
\end{lemma}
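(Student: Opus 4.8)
The plan is to unfold both entropic quantities into operator inequalities and then reduce the statement to cutting away a single ``bad'' subspace of $B$. First I would set $\lambda := H_{\min}(A|B)_{\rho}$ and fix a state $\sigma_B$ attaining the maximum in its definition, so that $\rho_{AB} \preceq 2^{-\lambda}\, \id_A \otimes \sigma_B$. Unfolding $h_{\min}$, the claimed bound $h_{\min}(\rhot_{AB} \| \id_A \otimes \rho_B) \geq \lambda - \log_2(2/\eps^2)$ is equivalent to the operator inequality $\rhot_{AB} \preceq 2^{-\lambda}\,(2/\eps^2)\, \id_A \otimes \rho_B$. The essential difficulty is already visible here: the optimiser $\sigma_B$ need not be the true marginal $\rho_B$, and the projector together with the smoothing cost $\log_2(2/\eps^2)$ exist precisely to remove the part of $B$ on which $\sigma_B$ dwarfs $\rho_B$.

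For the entropy bound I would introduce $\Delta_B := \rho_B^{-1/2}\,\sigma_B\,\rho_B^{-1/2}$ and restrict attention to its spectral subspace with eigenvalues at most $2/\eps^2$, on which $\Delta_B \preceq (2/\eps^2)\,\id$. Conjugating $\rho_{AB} \preceq 2^{-\lambda}\,\id_A \otimes \sigma_B$ by $\id_A \otimes \rho_B^{-1/2}$ rewrites the bound as $(\id_A \otimes \rho_B^{-1/2})\,\rho_{AB}\,(\id_A \otimes \rho_B^{-1/2}) \preceq 2^{-\lambda}\,\id_A \otimes \Delta_B$; on the chosen subspace the right-hand side is at most $2^{-\lambda}(2/\eps^2)\,\id$, and undoing the conjugation produces the target inequality $\rhot_{AB} \preceq 2^{-\lambda}(2/\eps^2)\,\id_A \otimes \rho_B$. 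This is the source of the factor $2/\eps^2$.

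To control the approximation I would bound the weight discarded by the cut-off. Writing $Q_B$ for the complementary projector onto the eigenvalues of $\Delta_B$ exceeding $2/\eps^2$, one has $Q_B \preceq (\eps^2/2)\, Q_B \Delta_B Q_B$ on its range, so that $\tr(\rho_B\, Q_B) \leq (\eps^2/2)\, \tr(\rho_B\, \Delta_B) \leq (\eps^2/2)\, \tr(\sigma_B) = \eps^2/2$, using only $\Delta_B Q_B \preceq \Delta_B$ and $\tr(\sigma_B)=1$. Because $\rho_{ABC}$ is pure, projecting it produces a sub-normalised pure vector whose generalised fidelity with $\rho_{ABC}$ equals its retained weight; hence a retained weight of at least $1-\eps^2/2$ gives $1 - F^2 \leq \eps^2$ and therefore $\rhot_{ABC} \approx^\eps \rho_{ABC}$.

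The step I expect to be the real obstacle is that the cut-off described above naturally lives on $B$, whereas the lemma insists on a genuine projector $\Pi_{AC}$ on the purifying system while the reference state stays the untouched marginal $\rho_B$. I would try to transport the $B$-side cut-off to $AC$ through the Schmidt decomposition of $\rho_{ABC}$ across the bipartition $B \,|\, AC$, which provides a mirror correspondence $(\id_B \otimes \Pi_{AC})\ket{\psi} = (P_B \otimes \id_{AC})\ket{\psi}$ and hence $\rhot_{AB} = P_B\,\rho_{AB}\,P_B$ together with $\tr(\rhot_{ABC}) = \tr(\rho_B P_B)$, reconciling both computations above with the required form. The delicate point is that the mirror of $P_B$ is a bona fide projector, and the two conjugation steps in the second paragraph are clean, only when $P_B$ commutes with $\rho_B$, whereas the effective cut-off built from $\Delta_B$ commutes with $\Delta_B$ rather than with $\rho_B$; choosing a single projector that is simultaneously mirrorable and strong enough to enforce the operator inequality is the crux. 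I expect this to be resolved either by a careful joint choice of cut-off or, more cleanly, by passing to the dual picture via $H_{\min}(A|B)_\rho = -H_{\max}(A|C)_\rho$, where the relevant cut-off acts on a system on which it commutes with the operator in question and its mirror is automatically a projector; this is the point at which I expect the technical work of the cited reference to concentrate.
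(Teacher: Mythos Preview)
The paper does not supply its own proof of this lemma; immediately after stating it, the authors write ``(We refer to~\cite{tomamichel10} for a proof.)'' So there is no in-paper argument to compare your sketch against.

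That said, your outline has the right shape. The penalty $\log_2(2/\eps^2)$ does arise from a spectral cutoff that isolates the region where the optimiser $\sigma_B$ dominates the true marginal $\rho_B$, and your purified-distance bound via the discarded weight is correct because $\rho_{ABC}$ is pure (so the generalised fidelity with the projected state equals the retained weight). You have also put your finger on the genuine obstacle: the cutoff built from $\Delta_B=\rho_B^{-1/2}\sigma_B\,\rho_B^{-1/2}$ commutes with $\Delta_B$ but not with $\rho_B$, so neither the ``undo the conjugation'' step in your second paragraph nor the Schmidt mirror to $AC$ produces a bona fide projector without further work; your counterexample intuition there is sound, since $P_B\rho_B P_B\preceq\rho_B$ fails for general projectors. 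Your instinct that the dual formulation $H_{\min}(A|B)_\rho=-H_{\max}(A|C)_\rho$ is the clean way around this is correct in spirit: in the cited reference the construction is arranged so that the relevant spectral projector sits naturally on the purifying side and the commutation issue does not arise. In short, your sketch is accurate up to the gap you yourself flag, and that gap is precisely the technical content the paper delegates to~\cite{tomamichel10}.
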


The next lemma provides a similar upper bound for the smooth min-entropy.

\begin{lemma}
  \label{lm:rel-smooth-bound}
  Let $\eps > 0, \eps' \geq 0$ and $\rho_{AB}$ a quantum state. 
  Then, there exists a state $\rhob_{AB}$ with $P(\rhob_{AB}, \rho_{AB}) \leq \eps + 2\eps'$ such that
  \begin{align}
    h_{\min}(\rhob_{AB} \| \id_A \otimes \rho_B) \geq 
      H_{\min}^{\eps'}(A|B)_{\rho} - \log_2 \, (2/\eps^2) \nonumber.
  \end{align}  
\end{lemma}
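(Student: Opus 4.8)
The plan is to combine the definition of the smooth min-entropy with Lemma~\ref{lm:rel-min-bound}, and to repair the resulting mismatch of reference states with the distance-optimal extension of Eq.~\eqref{eq:pd-ext}. Set $\mu := H_{\min}^{\eps'}(A|B)_\rho - \log_2(2/\eps^2)$. First I would use the definition of $H_{\min}^{\eps'}$ to pick an optimal sub-normalized state $\rhot_{AB} \approx^{\eps'} \rho_{AB}$ with $H_{\min}(A|B)_{\rhot} = H_{\min}^{\eps'}(A|B)_\rho$; monotonicity of the purified distance under the partial trace then gives $\rhot_B \approx^{\eps'} \rho_B$. Fixing a purification $\rho_{ABC}$ of $\rho_{AB}$ and using Uhlmann's theorem, I would choose a purification $\rhot_{ABC}$ of $\rhot_{AB}$ with $P(\rhot_{ABC}, \rho_{ABC}) = P(\rhot_{AB}, \rho_{AB}) \le \eps'$.

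Next I would apply Lemma~\ref{lm:rel-min-bound} to the pure state $\rhot_{ABC}$ with parameter $\eps$, obtaining a projector $\Pi_{AC}$ and $\rhoh_{ABC} = (\Pi_{AC} \otimes \id_B)\rhot_{ABC}(\Pi_{AC} \otimes \id_B)$ with $\rhoh_{ABC} \approx^{\eps} \rhot_{ABC}$ and $h_{\min}(\rhoh_{AB} \| \id_A \otimes \rhot_B) \ge \mu$, that is $\rhoh_{AB} \preceq 2^{-\mu}\, \id_A \otimes \rhot_B$. The reference here is $\rhot_B$ rather than the desired $\rho_B$, so I would invoke Eq.~\eqref{eq:pd-ext} (with $A$ and $B$ interchanged) to produce an operator $G_B$ with $G_B\,\rhot_B\,G_B^\dagger = \rho_B$ for which $(\id_A \otimes G_B)\rhot_{AB}(\id_A \otimes G_B^\dagger)$ is an extension of $\rho_B$ within purified distance $P(\rho_B,\rhot_B) \le \eps'$ of $\rhot_{AB}$. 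Conjugating the operator inequality by $\id_A \otimes G_B$ preserves the positive semidefinite order, so $\rhob_{AB} := (\id_A \otimes G_B)\rhoh_{AB}(\id_A \otimes G_B^\dagger)$ satisfies $\rhob_{AB} \preceq 2^{-\mu}\, \id_A \otimes (G_B\,\rhot_B\,G_B^\dagger) = 2^{-\mu}\, \id_A \otimes \rho_B$, i.e.\ $h_{\min}(\rhob_{AB} \| \id_A \otimes \rho_B) \ge \mu$, which is the claimed entropy bound. For the distance, two triangle-inequality steps suffice: conjugating $\rhot_{AB}$ by $\id_A \otimes G_B$ moves it by at most $P(\rho_B,\rhot_B) \le \eps'$ and $\rhot_{AB}$ is itself $\eps'$-close to $\rho_{AB}$, so the image of $\rhot_{AB}$ lies within $2\eps'$ of $\rho_{AB}$; it then remains only to control the distance between $\rhob_{AB}$ and this image, which is the image of the $\eps$-close pair $\rhoh_{AB}, \rhot_{AB}$ under the same conjugation.

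The main obstacle is exactly this last estimate. Purified distance is monotone only under trace non-increasing maps, so conjugation by $\id_A \otimes G_B$ keeps $\rhob_{AB}$ within $\eps$ of the image of $\rhot_{AB}$ only if $G_B^\dagger G_B \preceq \id_B$. However, the operator furnished by Eq.~\eqref{eq:pd-ext} is essentially $G_B = \rho_B^{1/2}\,\rhot_B^{-1/2}$ (up to a unitary, on the support of $\rhot_B$), which is a contraction precisely when $\rho_B \preceq \rhot_B$ and need not be one in general. The delicate part of the write-up is thus to carry out the reference swap with a contractive implementing operator, exploiting the freedom in the choice of purification and the fact that the smoothing optimizer may be taken sub-normalized, so that purified-distance monotonicity applies and the two triangle steps combine to the advertised bound $\eps + 2\eps'$.
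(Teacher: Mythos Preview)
Your overall strategy matches the paper's exactly: choose the smoothing optimizer, apply Lemma~\ref{lm:rel-min-bound} to its purification, then repair the reference state via conjugation by the operator of Eq.~\eqref{eq:pd-ext}. The obstacle you flag is genuine, but your proposed cure---arranging for $G_B$ to be a contraction---does not go through: the operator furnished by Eq.~\eqref{eq:pd-ext} is essentially $\rho_B^{1/2}\rhot_B^{-1/2}$, a contraction only if $\rho_B\preceq\rhot_B$, and neither sub-normalization of the optimizer (which pushes the inequality the wrong way) nor freedom in the choice of purification gives you that.

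The actual resolution is not to tame $G_B$ but to reroute the triangle inequality, exploiting that $G_B$ acts on $B$ while the projector $\Pi_{AC}$ from Lemma~\ref{lm:rel-min-bound} acts on $AC$, so they commute. Pass through $\rhoh_{AB}$ rather than through $G_B\rhot_{AB}G_B^\dagger$: then $P(\rhob,\rho)\le P(\rhob,\rhoh)+P(\rhoh,\rhot)+P(\rhot,\rho)\le P(\rhob,\rhoh)+\eps+\eps'$. For the remaining term, work at the level of the purification (apply Eq.~\eqref{eq:pd-ext} there, so that $P(G_B\rhot_{ABC}G_B^\dagger,\rhot_{ABC})=P(\rho_B,\rhot_B)\le\eps'$) and observe that
\[
\rhob_{ABC}=G_B\,\Pi_{AC}\,\rhot_{ABC}\,\Pi_{AC}\,G_B^\dagger=\Pi_{AC}\big(G_B\,\rhot_{ABC}\,G_B^\dagger\big)\Pi_{AC},
\qquad \rhoh_{ABC}=\Pi_{AC}\,\rhot_{ABC}\,\Pi_{AC}.
\]
Both are therefore images, under the trace non-increasing map $\tau\mapsto\Pi_{AC}\tau\Pi_{AC}$, of a pair of states already $\eps'$-close; monotonicity of the purified distance then gives $P(\rhob,\rhoh)\le\eps'$ and hence the total $\eps+2\eps'$, with no contractivity hypothesis on $G_B$ required.
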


\begin{proof}
  Let $\rho_{ABC}$ and $\rhoh_{ABC} \approx^{\eps'}\! \rho_{ABC}$ 
  be pure states such that 
  $H_{\min}^{\eps'}(A|B)_{\rho} = H_{\min}(A|B)_{\rhoh}$. We apply 
  Lemma~\ref{lm:rel-min-bound} to this state to get
  $h_{\min}(\rhot_{AB} \| \id_A \otimes \rhoh_B) 
      \geq h_{\min}^{\eps'}(A|B)_{\rho} - \log_2 \, (2/\eps^2)$,
  where $\ket{\rhot_{ABC}} = (\Pi_{AC} \otimes \id_B) \ket{\rhoh_{ABC}}$ and 
  $\rhot_{ABC} \approx^{\eps}\! \rhoh_{ABC}$.
  Using Eq.~\eqref{eq:pd-ext}, we define the operator $X_B$ with the 
  property $X_B \rhoh_B X_B^{\dagger} = \rho_B$; hence $X_B \rhoh_{ABC} X_B^\dagger
  \approx^{\eps'} \rhoh_{ABC}$. 
  
  Applying this to the defining operator inequality of the relative entropy above leads to
  \begin{align}
    \rhot_{AB} \preceq 2^{-\lambda} \id_A \otimes \rhoh_B \implies 
    \underbrace{X_B \rhot_{AB} X_B^{\dagger}}_{=: \rhob_{AB}} \preceq 2^{-\lambda} \id_A 
    \otimes \rho_B \nonumber
  \end{align}
  and, thus, $h_{\min}(\rhot_{AB} \| \id_A \otimes \rhoh_B) \leq 
  h_{\min}(\rhob_{AB} \| \id_A \otimes \rho_B)$. Furthermore, $\rhob_{AB}$ is 
  sub-normalized since $\tr(\rhob_B) = \tr(X_B \rhot_B X_B^\dagger) \leq 
  \tr(X_B \rhoh_B X_B^\dagger) = \tr(\rho_B) \leq 1$.
  Hence, it remains to bound $P(\rhob_{AB}, \rho_{AB}) \leq P(\rhob_{AB}, \rhot_{AB}) 
  + P(\rhot_{AB}, \rhoh_{AB}) + P(\rhoh_{AB}, \rho_{AB}) \leq P(\rhob_{AB}, \rhot_{AB}) + \eps + \eps'$. We have
  \begin{align} 
    P(\rhob_{AB}, \rhot_{AB}) &= P \big( (X_B \otimes \Pi_{AC}) \,\rhoh_{ABC}\, 
      (X_B^\dagger \otimes \Pi_{AC}),\, (\Pi_{AC} \otimes \id_B) \,\rhoh_{ABC}\, (\Pi_{AC} \otimes \id_B) \big) 
      \nonumber\\
      &\leq P(X_B \,\rhoh_B\, X_B^\dagger, \rhoh_B) \leq \eps' \nonumber,
  \end{align}
  where we used the monotonicity of the purified distance under projections.
\end{proof}

\subsection{Uncertainty of Two Consecutive Measurements}

We prove a more general result that implies Theorem~\ref{th:ucrmin}.
For this purpose, we consider two consecutive measurements applied to the $A$ system and a state $\rho_{ABC}$:  a projective measurement, $\bK = \{ P_A^k \}_k$, followed by either one of two POVMs, $\bX = \{ M_A^x \}_x$ or $\bY = \{ N_A^y \}_y$.
More precisely, we are interested in the post measurement states
\begin{align}
  \rho_{XKB} &= \sum_{x,k} \proj{x} \otimes \proj{k} \otimes \tr_{AC} \big( \big( P_A^k M_A^x P_A^k \otimes \id_{BC} \big) \rho_{ABC} \big)
    \quad \textrm{and} \label{eq:ucr/pm-xk}\\
  \rho_{YKC} &= \sum_{y,k} \proj{y} \otimes \proj{k} \otimes \tr_{AB} \big( \big( P_A^k N_A^y P_A^k \otimes \id_{BC}\big) \rho_{ABC} \big) 
    \label{eq:ucr/pm-yk}\,.
\end{align}

\begin{proposition}
  \label{pr:ucr}
  Let $\rho_{ABC}$ be a tripartite quantum state, let $\eps \geq 0$ and let 
  $\bar{\eps} > 0$.  
  Moreover, let $\bK = \{ P_A^k \}_k $ be a projective measurement and 
  $\bX = \{ M_A^x \}_x$ and $\bY = \{ N_A^y \}_y$ 
  be two POVMs on $A$. Then, the post measurement states~\eqref{eq:ucr/pm-xk} and~\eqref{eq:ucr/pm-yk} satisfy
  \begin{align}
    \label{eq:ucrthm}
    H^{\eps+2\bar{\eps}}_{\min}(X|BK)_{\rho} + H^{\eps}_{\max}(Y|CK)_{\rho}
      \geq - \log_2 c^*_{\bK}(\rho_A, \bX, \bY) - \log_2\, (2/\bar{\eps}^2)  \,,
  \end{align}
  where $c^*_{\bK}(\rho_A, \bX, \bY) := \sum_k \tr(P_A^k \rho_A) \max_{x} \big\| \sum_y P_A^k N_A^y P_A^k \cdot P_A^k M_A^x P_A^k \cdot P_A^k N_A^y P_A^k\, \big\|$.
\end{proposition}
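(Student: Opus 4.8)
The plan is to follow the standard template for proving entropic uncertainty relations in the smooth-entropy framework, adapting it to account for the intermediate projective measurement $\bK$ and the resulting averaged overlap $c^*_{\bK}$. The core strategy is to bound the smooth min-entropy $H^{\eps+2\bar\eps}_{\min}(X|BK)_\rho$ from below using the relative-entropy quantity $h_{\min}$, exploit its monotonicity under the CPTP map that implements the $\bX$ measurement, and then relate the resulting operator inequality to the $\bY$ measurement through the defining property of $c^*_{\bK}$. First I would invoke Lemma~\ref{lm:rel-smooth-bound} applied to the post-measurement state $\rho_{YKC}$ (equivalently, work with a purification $\rho_{YKBC}$): this produces a sub-normalized state $\rhob$ with $P(\rhob,\rho)\le\eps+2\bar\eps$ satisfying an operator inequality of the form $\rhob \preceq 2^{-\lambda}\,\id \otimes (\text{marginal})$, where $\lambda = H^{\eps'}_{\min}(\cdot)-\log_2(2/\bar\eps^2)$ for an appropriate choice linking the smoothing parameters $\eps$ and $\bar\eps$.

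The crucial step is to translate the operator inequality governing the $\bY$-branch into a bound on the $\bX$-branch. The mechanism is that $\bX$ and $\bY$, sandwiched between the same projectors $P_A^k$, are no longer independent: measuring $\bY$ and then $\bX$ within the block labelled by $k$ incurs a disturbance controlled precisely by $\big\|\sum_y P_A^k N_A^y P_A^k \cdot P_A^k M_A^x P_A^k \cdot P_A^k N_A^y P_A^k\big\|$. Concretely, I would write the $X$-measurement map as a sum of Kraus-like operators built from $\sqrt{P_A^k M_A^x P_A^k}$, insert the $\bY$-measurement as a resolution $\sum_y P_A^k N_A^y P_A^k$ acting within each block (legitimate because $\bK$ slices the state into orthogonal sectors), and apply Lemma~\ref{lm:pt-norm-bound} with the operators $E_A^k$ chosen so that $\sum_k E_A^k{}^\dagger E_A^k$ is exactly the sandwiched product whose norm defines $c^*_{\bK}$. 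Averaging over $k$ against the weights $\tr(P_A^k\rho_A)$ then yields the factor $c^*_{\bK}(\rho_A,\bX,\bY)$ in front.

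From the resulting operator inequality for $\rhob_{XKB}$, I would read off a lower bound on $h_{\min}(\rhob_{XKB}\,\|\,\id_X\otimes\rho_{KB})$, hence on $H^{\eps+2\bar\eps}_{\min}(X|BK)_\rho$ via the smoothing estimate, and combine it with the max-entropy term by way of the duality relation $H^\eps_{\max}(Y|CK)_\rho = -H^\eps_{\min}(Y|EK)_\rho$ for a purifying system $E$. The additive constant $-\log_2(2/\bar\eps^2)$ is the price paid in Lemma~\ref{lm:rel-smooth-bound} for passing from the smooth min-entropy to $h_{\min}$, and it propagates unchanged to the final bound. Collecting the two entropic contributions and the $-\log_2 c^*_{\bK}$ factor gives~\eqref{eq:ucrthm}.

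I expect the main obstacle to be the second step: making the "insert $\bY$ inside each $k$-block and bound the disturbance" argument fully rigorous while keeping the sub-normalization and the purified-distance bounds intact. In particular, one must be careful that the projective structure of $\bK$ is what permits the resolution $\sum_y P_A^k N_A^y P_A^k$ to be inserted block-diagonally without cross terms between different $k$, and that the operators $E_A^k$ fed into Lemma~\ref{lm:pt-norm-bound} are assembled so the norm matches $c^*_{\bK}$ exactly rather than a looser quantity; the bookkeeping of which marginal ($\rho_{KB}$ versus $\rho_B$) appears on the right-hand side of the relative-entropy inequality also requires attention, since the register $K$ must be carried through as conditioning side information on both entropies.
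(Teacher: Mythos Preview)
Your overall strategy matches the paper's: reduce to pure $\rho_{ABC}$, use the duality $H^\eps_{\max}(Y|CK)_\rho=-H^\eps_{\min}(Y|E)_\rho$, apply Lemma~\ref{lm:rel-smooth-bound} to pass to an $h_{\min}$ operator inequality, and then convert it into a bound on $H_{\min}(X|BK)$ via Lemma~\ref{lm:pt-norm-bound}, picking up $c^*_{\bK}$. The smoothing bookkeeping and the role of the $-\log_2(2/\bar\eps^2)$ term are also correctly identified.

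The gap is in the conversion step, and it originates in the parenthetical ``equivalently, work with a purification $\rho_{YKBC}$'': that state is \emph{not} a purification of $\rho_{YKC}$, because measuring $\bK$ and $\bY$ on $A$ destroys purity. The correct purifying system $E$ is $AY'K'B$, obtained from the Stinespring dilation $V=\sum_{y,k}\ket{y}_Y\ket{y}_{Y'}\ket{k}_K\ket{k}_{K'}\otimes\sqrt{N_A^y}P_A^k$ of the joint $(\bY,\bK)$ measurement; the paper builds the analogous dilation $U$ for $(\bX,\bK)$ and the partial isometry $W=UV^\dagger$. The passage from the $Y$-branch inequality to the $X$-branch is then literally conjugation by $W$ (after tensoring with $\id_K$) followed by the partial trace over $A,X',K'$. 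Only then does the marginal $\rho_{AY'K'B}=\sum_{y,k}\sqrt{N_A^y}P_A^k\rho_{AB}P_A^k\sqrt{N_A^y}\otimes\proj{y}\otimes\proj{k}$ appear on the right, and Lemma~\ref{lm:pt-norm-bound} with $E_A^y=\sqrt{M_A^x}P_A^k N_A^y P_A^k$ produces exactly $\sum_y P_A^k N_A^y P_A^k M_A^x P_A^k N_A^y P_A^k$ under the norm. Your phrase ``insert the $\bY$-measurement as a resolution'' points in this direction, but without the dilation there is no $A$ register left on which to act with the $\bX$ operators, so the argument as written does not close. Once you carry the Stinespring registers $Y',K'$ explicitly (and dephase $K'$ before conjugating, which is the step that makes the block structure manifest), the remainder of your plan goes through essentially verbatim.
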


\begin{proof}
  We first prove the statement for pure $\rho_{ABC}$. Then, for mixed states, we consider a purification $\rho_{ABCE}$ of 
  $\rho_{ABC}$, for which the theorem holds and take the partial trace over $E$. 
  As this cannot decrease the smooth entropies~\eqref{eq:data-proc}, the 
  generalization follows.

  We consider the Stinespring dilation of the joint 
  measurement of $\bX$ and $\bK$, denoted $U$, which coherently stores the measurement
  outcome of $\bX$ in registers $X$ and $X'$ and the measurement outcome of 
  $\bK$ in $K$ and $K'$, i.e.\
  $U := \sum_{x,k} \ket{x}_{X} \otimes \ket{x}_{X'} \otimes \ket{k}_{K} \otimes \ket{k}_{K'} 
  \otimes \sqrt{M_A^x}\, P_A^k$.
  Similarly, we introduce the Stinespring dilation of the joint measurement of 
  $\bY$ and $\bK$, and the partial isometry $W := U V^\dagger$ which, using 
  $P_A^k P^{k'} = \delta_{kk'} P_A^k$, evaluates to
  \begin{align}
    W = \sum_{x,y,k} \ket{x}\!\bra{y} \otimes \ket{x}\!\bra{y} \otimes \proj{k} \otimes \proj{k} 
      \otimes \sqrt{M_A^{x\phantom{y}\!\!\!}} P_A^k \sqrt{N_A^y} \label{eq:ucr/gp0}\,.
  \end{align}
  
  These isometries allow us to introduce the states $\rho_{AXX'KK'BC} = U \rho_{ABC} U^\dagger$
  and, analogously, $\rho_{AYY'KK'BC} = V \rho_{ABC} V^\dagger$, whose marginals correspond to the 
  post measurement states $\rho_{XKB}$ and $\rho_{YKC}$ of~\eqref{eq:ucr/pm-xk} 
  and~\eqref{eq:ucr/pm-yk}, respectively.
  
  The proof now proceeds in several steps. First, we reformulate the statement of the theorem 
  in terms of smooth min-entropies using the definition of the smooth max-entropy.
  Then, we use Lemma~\ref{lm:rel-smooth-bound}
  to find an upper bound 
  on one of the entropies in terms of a relative entropy of the state and its marginal. The
  structure of the marginal can then be used to extract $c_\bK^*$.
  
  Due to the duality~\cite{tomamichel09} between smooth min- and max-entropy, the statement of the proposition 
  is equivalent to
 $H_{\min}^{2\eps+\bar{\eps}}(X|KB)_{\rho} \geq H_{\min}^{\eps}(Y|AY'K'B)_{\rho} 
      - \log_2 c_{\bK}^* - 
      \log_2\, (2/\bar{\eps}^2)$.
  Applying Lemma~\ref{lm:rel-smooth-bound}, we introduce a state $\rhot \approx^{2\eps+\bar{\eps}}\! \rho$ such that
  \begin{align*}
    h_{\min}(\rhot_{AYY'K'B} \| \id_Y \otimes \rho_{AY'K'B}) 
      \geq H_{\min}^{\eps}(Y|AY'K'B)_{\rho} - \log_2\, (2/\bar{\eps}^2) \,.
  \end{align*}
  
  Next, we use the monotonicity of $h_{\min}$ under trace-preserving completely positive maps to measure the 
  $K'$ system. More precisely, we apply the map $\mathcal{M}: \rho \mapsto \sum_k \proj{k}_{K'} \rho \proj{k}_{K'}$ to both arguments in $h_{\min}$ above.
  This has no effect on $\rho_{AY'K'B}$, which is classical on $K'$ by definition. 
  Using the state $\rhob_{AYY'K'B} = \mathcal{M}[\rhot_{AYY'K'B}]$, we thus have
  \begin{align}
    \underbrace{h_{\min}(\rhob_{AYY'K'B}\|\id_Y \otimes \rho_{AY'K'B})}_{=:\ \lambda} 
      &\geq H_{\min}^{\eps}(Y|AY'K'B)_{\rho} - \log_2\, (2/\bar{\eps}^2) 
      \label{eq:ucr/gp2}\,.
  \end{align}
  Moreover, the purified distance satisfies $P(\rhob, \rho) \leq P(\rhot, \rho) \leq 2\eps + \bar{\eps}$.

  From the definition of 
  $h_{\min}$, we get
  \begin{align}
    \rhob_{AYY'K'B} \preceq 2^{-\lambda}\, \id_Y \otimes \rho_{AY'K'B} \,, \label{eq:ucr/gp3} 
  \end{align}
  where we employed the marginal state $\rho_{AY'K'B} = \tr_{YK'}(V \rho_{AB} V^{\dagger}) = \sum_{y,k}\! \sqrt{N_A^y} P_A^k \rho_{AB} P_A^{k}\! \sqrt{N_A^y} \otimes 
      \proj{k} \otimes \proj{y}$.
  Taking the tensor product with $\id_{K}$ on both sides of~\eqref{eq:ucr/gp3}, conjugating the resulting inequality with $W$ and taking the partial trace over 
  $A$, $Y'$ and $K'$ leads to
  \begin{align}
    \underbrace{\tr_{AX'K'} \big( W (\rhob_{AYY'K'B} \otimes \id_{K}) W^{\dagger} \big)}_{=:\ \taub_{XKB}}
      \preceq 2^{-\lambda}\, \tr_{AX'K} \big( W (\id_{YK} \otimes \rho_{AY'K'B}) W^{\dagger} \big) \label{eq:ucr/gp5}\,.
  \end{align}
  We evaluate the trace term on the rhs.\ of~\eqref{eq:ucr/gp5} to get
  \begin{align}
    &\tr_{AX'K'} \big( W (\id_{YK} \otimes \rho_{AY'K'B}) W^{\dagger} \big) \nonumber\\
    &\quad\!\! = \sum_{x,y,k} \proj{x} \otimes \proj{k} \otimes 
      \brakket{yk}{\tr_{A} \big( \sqrt{M_A^x} P_A^k\! \sqrt{N_A^y} \rho_{AY'K'B} \sqrt{N_A^y} P_A^k\! \sqrt{M_A^x} \big)}{yk} \nonumber\\
    &\quad\!\! = \sum_{x} \proj{x} \otimes \sum_k \proj{k} \otimes 
      \tr_{A} \Big( \sum_y \sqrt{M_A^x} P_A^k\! N_A^y P_A^k \rho_{AB} P_A^k N_A^y P_A^k\! \sqrt{M_A^x} \Big) \nonumber\\
    &\quad\!\! \preceq \id_X \otimes \underbrace{\sum_k \proj{k} \otimes 
      \max_x \big\| \sum_y P_A^k N_A^y P_A^k M_A^x P_A^k N_A^y P_A^k \big\| 
      \tr_{A} (P_A^k \rho_{AB})}_{=:\ \omegat_{KB}} \label{eq:ucr/gp25}
  \end{align}
  We used Lemma~\ref{lm:pt-norm-bound} to arrive at~\eqref{eq:ucr/gp25}.
  Note that $\tr(\omegat_{KB}) = c_{\bK}^*$; hence, we choose $\omega_{KB} = \omegat_{KB}/c_{\bK}^*$ and employ~\eqref{eq:ucr/gp5} to find a lower bound on $h_{\min}(\taub_{XKB}\|\id_X \otimes \omega_{KB})$ in
  terms of $\lambda$ and $c_{\bK}^*$, i.e.
  \begin{align}
    h_{\min}(\taub_{XKB}\|\id_X \otimes \omega_{KB}) &\geq \lambda - \log_2 c_{\bK}^* \nonumber\\
      &\geq H_{\min}^{\eps}(Y|AY'K'B)_{\rho} -
      \log_2 c_{\bK}^* - \log_2\, (2/\bar{\eps}^2) \label{eq:ucr/gp6}\,.
  \end{align}
  We have
$P(\taub_{XKB}, \rho_{XKB}) = P(\rhob_{XKB}, \rho_{XKB}) \leq 2\eps + \bar{\eps}$.
  Therefore, using the definition of the smooth min-entropy, we get
$H_{\min}^{2\eps+\bar{\eps}}(X|KB)_{\rho} \geq 
h_{\min}(\taub_{XKB} \| \id_X \otimes \omega_{KB})$,
  which, substituted into~\eqref{eq:ucr/gp6}, concludes the proof.
\end{proof}

\subsection{Proof of Theorem~\ref{th:ucrmin}}

Theorem~\ref{th:ucrmin} is a corollary of Proposition~\ref{pr:ucr}. 

\begin{proof}[Proof of Theorem~\ref{th:ucrmin}]
Recall that the effective overlap is defined as
\begin{align*}
  c^*(\rho_A, \bX, \bY) = \inf_{U,\, \bX', \bY', \bK'} c_{\bK'}^*(U \rho_A U^{\dagger}, \bX', \bY') \,,
\end{align*}
where the infimum is taken over all embeddings $U$ from $A$ to $A'$, all measurements
$\bX' = \{ M_{A'}^x \}_x$ and $\bY' = \{ N_{A'}^y \}_y$ on $A'$ and all projective measurements $\bK' = \{ P_{A'}^k \}_k$ such that $\sum_k U^{\dagger} P_{A'}^k M_{A'}^x P_{A'}^k U = M_A^x$ and 
$\sum_k U^{\dagger} P_{A'}^k N_{A'}^y P_{A'}^k U = N_A^y$.
Furthermore, for any such $\{U, \bX', \bY', \bK' \}$, Proposition~\ref{pr:ucr} implies that the
post measurement states
\begin{align}
  \tau_{XKB} &= \sum_{x,k} \proj{x} \otimes \proj{k} \otimes \tr_{A'C} \big( \big( P_{A'}^k M_{A'}^x P_{A'}^k \otimes \id_{BC} \big) U \rho_{ABC} U^{\dagger} \big)
    \quad \textrm{and} \label{mm1} \\
  \tau_{YKC} &= \sum_{y,k} \proj{y} \otimes \proj{k} \otimes \tr_{A'B} \big( \big( P_{A'}^k N_{A'}^y P_{A'}^k \otimes \id_{BC}\big) U \rho_{ABC} U^{\dagger} \big) \label{mm2} \,.
\end{align}
satisfy
\begin{align*}
    H^{\eps+2\bar{\eps}}_{\min}(X|B)_{\tau} + H^{\eps}_{\max}(Y|C)_{\tau} &\geq
    H^{\eps+2\bar{\eps}}_{\min}(X|BK)_{\tau} + H^{\eps}_{\max}(Y|CK)_{\tau} \\
      &\geq - \log_2 c^*_{\bK'}(U \rho_A U^{\dagger}, \bX', \bY') - \log_2\, (2/\bar{\eps}^2)  \,,
\end{align*}
where we also employed the data-processing inequality of the smooth min- and max-entropies~\eqref{eq:data-proc} to trace out the $K$ system.
Furthermore, the marginal states of~\eqref{mm1} and~\eqref{mm2} without $K$ correspond to the post measurement states when measuring $\bX$
and $\bY$ on $\rho$, namely
\begin{align*}
  \tr_K(\tau_{XKB}) &= \sum_x \proj{x} \otimes \tr_{AC}\Big( \Big( \sum_k U^{\dagger} P_{A'}^k M_{A'}^x P_{A'}^k U \Big) \rho_{ABC} \Big) \\
  &= \sum_x \proj{x} \otimes \tr_{AC} \big( M_A^x\, \rho_{ABC} \big) = \cM_{\bX}[\rho_{AB}]
\end{align*}
and, similarly, $\tr_K(\tau_{YKC}) = \cM_{\bY}[\rho_{AC}]$. This implies that the uncertainty relation holds
for each candidate in the minimization and, thus, also for its infimum. (The last argument implicitly uses the continuity of the function $-\log_2$.) This concludes the proof.
\end{proof}

\subsection{Proof of Theorem~\ref{th:ucr}}

Theorem~\ref{th:ucr} follows as a corollary of Theorem~\ref{th:ucrmin} and the entropic
asymptotic equipartition~\eqref{eq:aep}. 

\begin{proof}[Proof of Theorem~\ref{th:ucr}]
We apply Theorem~\ref{th:ucrmin}
to the state $\rho_{ABC}^n = \rho_{ABC}^{\otimes n}$ and use the measurements $\bX^n$ and $\bY^n$,
which measure $\bX$ and $\bY$ on each of the $n$ copies, respectively.
It is easy to verify that $c^*(\rho_A^n, \bX^n, \bY^n) \leq 
c^*(\rho_A, \bX, \bY)^n$ in this case. Theorem~\ref{th:ucrmin} applied
to this situation thus yields
\begin{align*}
    \frac{1}{n} H_{\min}^{\eps+2\bar{\eps}}(X^n|B^n)_{\rho} + \frac{1}{n} H_{\max}^{\eps}(Y^n|C^n)_{\rho} \geq - \log_2 c^*(\rho_A, \bX, \bY) - \frac{1}{n} \log_2\, (2/\bar{\eps}^2) \,.
\end{align*}
Finally, taking the limit $n \to \infty$ and employing~\eqref{eq:aep} immediately proves Theorem~\ref{th:ucr}. 
\end{proof}

\section{Proof of Relation to Nonlocality}
\label{sc:overlap}

\subsection{Preliminaries}

Projective measurements with binary outcomes can be described compactly as 
an \emph{observable} $O = M_0 - M_1$ with spectrum in $\{1, -1\}$, i.e.\ $O^2 = \id$.
Tsirelson~\cite{tsirelson80} related the correlations which can be achieved when
measuring quantum 
systems to the existence of unit vectors 
in a real vector space. 
Namely, Tsirelson's result states that for any set of observables 
$O_1,\dotsc,O_n$ and $Q_1,\ldots,Q_n$ with eigenvalues
in the interval $[-1,1]$ and any bipartite pure state 
$\proj{\psi}$ there exist real unit vectors $x_1, \dotsc, 
x_n, y_1,\dotsc, y_n \in \bR^{2n}$ s.t.\
\begin{align}
  \brakket{\psi}{O_i \otimes Q_j}{\psi} &= x_i^\intercal \cdot y_j 
  \label{eq:tsirelson-observables}
\end{align}
for all $i,j \in [n]$. Conversely, if there exist such real unit vectors
$x_i$ and $y_j$, it is 
possible to find sets of observables $O_i$ on $\mathcal{H}$ and $Q_i$ on 
$\mathcal{H}'$ with eigenvalues $\pm 1$ and $\dim \mathcal{H} = \dim \mathcal{H}' = n$ such 
that \eqref{eq:tsirelson-observables} holds with  $\ket{\psi}$ a maximally entangled state. 

As shown by Wehner~\cite{wehner06}, this implies that the maximal CHSH value reachable 
by a quantum system can be calculated using a
\emph{semidefinite program} (SDP), more precisely, an optimization problem of the 
form $\texttt{max:} \tr (B G)$, $\texttt{subject to:} \tr (E_i G) = e_i$ for all $i$, and $G \succeq 0$. 
Here, $\{ E_i, e_i \}_i$ is a set of linear constraints and $G$ 
is the variable to be optimized over (we refer to e.g.~\cite{boyd04} for details on 
semidefinite programming). The reason for this is, that a (real symmetric) 
matrix $G$ is positive semidefinite if and only if it can be expressed as 
 $G = B^\intercal B$, i.e., its entries are the inner product of the vectors 
 representing the columns of $B$. 

For example, for the case of two inputs and outputs, the
correlations can be arranged in a $4\times 4$ matrix
$G = ( g_{ij} ) \text{ with } g_{ij} := x_i^\intercal \cdot x_j$.
Conversely, any $4\times 4$ positive semidefinite matrix with diagonal entries
equal to $1$ can be seen as an arrangement of this sort, since 
$G = B^\intercal B$ where $B = ( x_1, x_2, y_1,
y_2 )$.
The expected CHSH value, $\beta$, of a certain setup between two parties
can be calculated from this matrix $G$ using
\begin{align*}
  \beta(\ket{\psi}, O_1, O_2, Q_1, Q_2) = \brakket{\psi}{O_1 \!\otimes\! Q_1 \!+\! O_1 \!\otimes\! Q_2 
    \!+\! O_2 \!\otimes\! Q_1 \!-\! O_2 \!\otimes\! Q_2}{\psi} = \tr ( W G ), 
\end{align*}
where $G$ is defined as above and
\begin{align*}
 \quad W := \frac{1}{2}
  \left(
     \begin{array}{cccc}
      0 & 0 & 1 & 1 \\
      0 & 0 & 1 & -1 \\
      1 & 1 & 0 & 0 \\
      1 & -1 & 0 & 0
    \end{array}
    \right).
\end{align*}

\subsection{Generalization of Tsirelson's Results}

We are here concerned with extending Tsirelson's relation between symmetric matrices 
and bipartite measurements of the previous section to the case
where the overlap of the local observables is restricted. For this purpose, we first define an effective
overlap of two observables.
\begin{definition}
Let $O_1$, $O_2$ be observables on $\mathcal{H}$ with binary spectrum $\{-1, 1\}$
and let $\rho$ be a density operator acting on $\mathcal{H}$. The \emph{effective overlap 
between the observables $O_1$ and $O_2$ on $\rho$} is 
\begin{align}
  \nonumber \gamma^*(\rho, O_1, O_2) &:=
  \frac{1}{4} \tr \big( \rho 
    ( O_1 + O_2 )^2 \big) .
\end{align}
\end{definition}
\noindent We will later make a connection between this quantity and the effective overlap of POVMs, $c^*$.

The following Lemma is an extension of Tsirelson's~\cite{tsirelson80} original
relation in the form used in~\cite{wehner06}.

\begin{lemma}
\label{lem:tsirelsonext}
Let $\rho_{AB}$ be a bipartite quantum state.
Furthermore, let $O_1,O_2,\ldots,O_n$ be observables with binary spectrum
$\{-1,1\}$ on $A$ and let $Q_1,Q_2,\ldots,Q_m$ be observables with
binary spectrum $\{-1,1\}$ on $B$. 
Then, there exists a real positive semidefinite 
$(n + m) \times (n + m)$ matrix $G$ such that, for all $i, i' \in [n]$, $j, j'
\in [m]$,
\begin{align*}
  (G)_{i(n+j)} &= (G)_{(n+j)i} = \tr \big( (O_i \otimes Q_j ) \rho_{AB} \big)
 \\
 (G)_{ii^\prime} &= 2 \gamma^*(\rho_A, O_i, O_{i^\prime}) - 1 \\
  (G)_{(n+j)(n+j^\prime)} &= 2 \gamma^*(\rho_B, Q_j, Q_{j^\prime} ) - 1
\end{align*}
\end{lemma}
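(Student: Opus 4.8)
The plan is to reduce the statement to Tsirelson's original construction~\eqref{eq:tsirelson-observables} applied to a suitable \emph{purification}, and then to identify the three blocks of the Gram matrix $G$ with inner products of the resulting unit vectors. The key observation is that the off-diagonal block $(G)_{i(n+j)} = \tr((O_i \otimes Q_j)\rho_{AB})$ is exactly the correlation term appearing in~\eqref{eq:tsirelson-observables}, while the diagonal blocks should turn out to be the inner products $x_i^\intercal x_{i'}$ and $y_j^\intercal y_{j'}$ of the vectors representing the \emph{same} party's observables. So the real content is to show that these same-party inner products can be written in terms of the effective overlap $\gamma^*$.

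First I would pass to a purification $\ket{\psi}_{ABE}$ of $\rho_{AB}$ and apply Tsirelson's theorem~\eqref{eq:tsirelson-observables} to the observables $O_1,\dots,O_n$ on $A$ and $Q_1,\dots,Q_m$ on $B$, regarding $E$ as part of, say, the reference system so that the bipartite cut is $A$ versus $BE$. This furnishes real unit vectors $x_1,\dots,x_n$ and $y_1,\dots,y_m$ with $x_i^\intercal y_j = \brakket{\psi}{O_i \otimes Q_j}{\psi} = \tr((O_i \otimes Q_j)\rho_{AB})$, giving the off-diagonal block immediately. The crucial point to verify is that Tsirelson's construction produces the \emph{same-party} inner products as $x_i^\intercal x_{i'} = \brakket{\psi}{\tfrac{1}{2}(O_i O_{i'} + O_{i'} O_i)}{\psi}$, i.e.\ the expectation of the symmetrized product $\tfrac12\{O_i,O_{i'}\}$ in the marginal state $\rho_A$. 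I would confirm this by inspecting the explicit vector assignment in Tsirelson's proof (the $x_i$ are built from $O_i\ket{\psi}$ in a real-linear embedding, so their inner products reproduce the real part of $\brakket{\psi}{O_i O_{i'}}{\psi}$).

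The remaining step is then a short algebraic identity. Since $O_i^2 = O_{i'}^2 = \id$, I would expand
\begin{align*}
  \tfrac{1}{4}\tr\big(\rho_A (O_i + O_{i'})^2\big)
  = \tfrac{1}{4}\tr\big(\rho_A (2\id + O_i O_{i'} + O_{i'} O_i)\big)
  = \tfrac{1}{2} + \tfrac{1}{2}\tr\big(\rho_A\, \tfrac{1}{2}\{O_i, O_{i'}\}\big),
\end{align*}
so that $\gamma^*(\rho_A, O_i, O_{i'}) = \tfrac12 + \tfrac12\, x_i^\intercal x_{i'}$, which rearranges to $x_i^\intercal x_{i'} = 2\gamma^*(\rho_A,O_i,O_{i'}) - 1$, matching the claimed diagonal block; the analogous computation with $\rho_B$ handles the $Q_j$ block. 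Setting $G := B^\intercal B$ with $B = (x_1,\dots,x_n,y_1,\dots,y_m)$ yields a real positive semidefinite $(n+m)\times(n+m)$ matrix with precisely the three prescribed blocks.

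I expect the main obstacle to be the second step: making rigorous that Tsirelson's embedding reproduces the \emph{real part} of the same-party operator expectations as honest Euclidean inner products. One must check that the vectors can be taken real (doubling the dimension if necessary, as in~\eqref{eq:tsirelson-observables} where the ambient space is $\bR^{2n}$) and that the symmetrization $\tfrac12\{O_i,O_{i'}\}$ — rather than the raw product $O_iO_{i'}$, whose expectation may be complex — is what survives. Once the correspondence $x_i^\intercal x_{i'} = \tr(\rho_A\,\tfrac12\{O_i,O_{i'}\})$ is established, the rest is the routine expansion above.
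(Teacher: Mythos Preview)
Your proposal is correct and essentially matches the paper's proof: the paper also purifies $\rho_{AB}$ to $\ket{\psi}$, defines the vectors directly as $x_i = (O_i \otimes \id)\ket{\psi}$ and $x_{n+j} = (\id \otimes Q_j)\ket{\psi}$, and then symmetrizes the resulting Hermitian Gram matrix via $G = (\bar G + \bar G^\intercal)/2$ to extract the real part---which is exactly the ``real-linear embedding'' step you anticipate. The only difference is presentational: the paper bypasses invoking Tsirelson's theorem as a black box and goes straight to the explicit vector construction you say you would ``inspect,'' so your detour through~\eqref{eq:tsirelson-observables} is unnecessary but harmless.
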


\begin{proof}
To prove the statement, we construct the matrix $G$ for given $\rho_{AB}$ and
observables $O_i$ and $Q_j$. Let $\ket{\psi}$ be a purification of $\rho_{AB}$
on an auxiliary system $C$. 
Then, we define vectors for all $i \in [n], j \in [m]$: $x_i := (O_i \otimes \id_B
\otimes \id_C) \ket{\psi}$ and $x_{n+j} := (\id_A \otimes Q_j \otimes \id_C)
\ket{\psi}$. The $(n+m) \times (n+m)$ matrix $\bar{G}$ given by the inner
products, i.e.\ $(\bar{G})_{kk^\prime} = {x_k}^\dagger x_{k^\prime}$, is Hermitian and
positive semidefinite by construction. Finally, $G = (\bar{G} + \bar{G}^\intercal)/2$ is
positive semidefinite, real and symmetric. 

It remains to check that the correlations
agree.
First, note that
\begin{align*} 
  (G)_{i(n+j)} = (\bar{G})_{i(n+j)} = \brakket{\psi}{O_i
  \otimes Q_j \otimes \id_C}{\psi} = \tr \big( (O_i \otimes Q_j ) \rho_{AB} \big) \,.  
\end{align*}
Moreover, the local terms on $A$ evaluate to
\begin{align*}
\nonumber (G)_{ii'} &= \frac{1}{2} \brakket{\psi}{(O_i O_{i'} \otimes
\id_{BC})}{\psi} + \frac{1}{2} \brakket{\psi}{(O_{i'} O_i \otimes
\id_{BC})}{\psi}  \\
\nonumber &=\frac{1}{2}\tr \big( \rho_A (O_i O_{i'} + O_{i'} O_i) \big) 
  =2 \gamma^*(\rho_A, O_i, O_{i'}) -1 
\end{align*}
and similarly on $B$ with $(G)_{(n+j)(n+j)}$.
\end{proof}

The converse is also true, for every matrix $G$ satisfying above properties, there exists a 
physical realization. This corresponds to the converse of 
Tsirelson's theorem~\cite{tsirelson80,tsirelson93} 
(see also~\cite{wehnerphd} for a detailed explanation). 
\begin{lemma}
\label{lem:tsirelsonext2}
Let $G$ be a real positive semidefinite $(n + m) \times (n + m)$ matrix with $(G)_{ii}=1$.
Then there exists a quantum state $\rho_{AB}$, 
observables $O_1,O_2,\ldots,O_n$ with binary spectrum
$\{-1,1\}$ on $A$  and observables $Q_1,Q_2,\ldots,Q_m$  with
binary spectrum $\{-1,1\}$ on $B$, such that, for all $i, i' \in [n]$, $j, j'
\in [m]$, it holds that
\begin{align*}
  \tr \big( (O_i \otimes Q_j ) \rho_{AB} \big) &= (G)_{i(n+j)} \\
  2 \gamma^*(\rho_A, O_i, O_{i'}) - 1 
    &= (G)_{ii'} \\
 2 \gamma^*(\rho_B, Q_j, Q_{j'}) - 1 &=   (G)_{(n+j)(n+j')} 
\end{align*}
\end{lemma}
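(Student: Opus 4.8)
The plan is to prove the converse of the generalized Tsirelson theorem (Lemma~\ref{lem:tsirelsonext2}) by an explicit Clifford-algebra construction, mirroring the forward direction but reversing it. Given a real PSD matrix $G$ with unit diagonal, I first factor it as $G = B^\intercal B$, so that its columns are unit vectors $x_1, \dots, x_n, y_1, \dots, y_m \in \bR^{n+m}$ with $x_k^\intercal x_{k'} = (G)_{kk'}$. The task is then to realize these Gram vectors physically: to produce observables $O_i$ on $A$, observables $Q_j$ on $B$, and a state $\rho_{AB}$ whose correlations reproduce the three families of inner products. The natural tool is a representation of the Clifford algebra, i.e.\ self-adjoint operators $\Gamma_1, \dots, \Gamma_{n+m}$ with $\{\Gamma_k, \Gamma_\ell\} = 2\delta_{k\ell}\id$, acting on a space $\h$ of dimension $2^{\lceil (n+m)/2 \rceil}$.

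\medskip\noindent Concretely, **first** I would set $O_i := \sum_k (x_i)_k\, \Gamma_k$ and $Q_j := \sum_k (y_j)_k\, \Gamma_k$. Since each $x_i$ and $y_j$ is a unit vector, the anticommutation relations immediately give $O_i^2 = \|x_i\|^2\,\id = \id$ and $Q_j^2 = \id$, so both families have binary spectrum $\{-1,1\}$ as required. **Next**, I would take $A$ and $B$ to be two copies of $\h$ and let $\ket{\psi}$ be the maximally entangled state on $A \otimes B$; then $\rho_{AB} = \proj{\psi}$ and $\rho_A = \rho_B = \id/\dim\h$ is the maximally mixed state. The key feature of the maximally entangled state is the transpose trick $\brakket{\psi}{O \otimes Q}{\psi} = \frac{1}{\dim\h}\tr(O\, Q^\intercal)$, which converts the bipartite correlation into a single-system trace. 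Choosing the Clifford representation so that each $\Gamma_k$ is real symmetric (which can always be arranged) makes $Q_j^\intercal = Q_j$, and a short computation using $\frac{1}{\dim\h}\tr(\Gamma_k \Gamma_\ell) = \delta_{k\ell}$ yields $\tr((O_i \otimes Q_j)\rho_{AB}) = \sum_{k}(x_i)_k (y_j)_k = x_i^\intercal y_j = (G)_{i(n+j)}$, matching the cross term.

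\medskip\noindent **For the local terms**, I would verify that the effective overlap reproduces the diagonal blocks. Using $O_i O_{i'} + O_{i'} O_i = \sum_{k,\ell}(x_i)_k (x_{i'})_\ell \{\Gamma_k, \Gamma_\ell\} = 2\,(x_i^\intercal x_{i'})\,\id$ and $\rho_A = \id/\dim\h$, the definition of $\gamma^*$ gives
\begin{align*}
  2\gamma^*(\rho_A, O_i, O_{i'}) - 1 = \frac{1}{2}\tr\big(\rho_A(O_iO_{i'} + O_{i'}O_i)\big) = x_i^\intercal x_{i'} = (G)_{ii'},
\end{align*}
and the identical calculation on $B$ settles the $Q_j$ block. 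This is essentially the forward computation of Lemma~\ref{lem:tsirelsonext} run in reverse, so no new inequality is needed.

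\medskip\noindent **The main obstacle** is not any single inequality but rather the bookkeeping needed to guarantee a representation of the Clifford relations that is simultaneously real and symmetric, so that the transpose trick and the reality of $G$ cooperate. I expect to handle this by invoking a standard explicit Majorana/Jordan--Wigner construction of the $\Gamma_k$ as tensor products of Pauli matrices, which can be chosen real symmetric; this is exactly the ingredient underlying the converse direction of Tsirelson's original theorem~\cite{tsirelson80,tsirelson93}. The only subtlety is that when $n+m$ is odd one needs an extra generator or a parity qubit, but this affects only the dimension of $\h$ and not the correlations. Since the statement places no dimension constraint and all three correlation families are linear in the Gram data, verifying them reduces to the anticommutation algebra above.
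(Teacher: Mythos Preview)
Your proposal is correct and follows essentially the same route as the paper: factor $G$ as a Gram matrix of real unit vectors, realize the vectors as observables via Clifford generators, and evaluate on a maximally entangled state. The one substantive difference is how the transpose in the identity $\brakket{\psi}{A\otimes B}{\psi}=\tfrac{1}{d}\tr(A\,B^{\intercal})$ is absorbed. You propose to force the $\Gamma_k$ to be real symmetric so that $Q_j^{\intercal}=Q_j$; this can indeed be arranged, but note that the \emph{standard} Jordan--Wigner construction is not real symmetric (half the generators carry a Pauli-$Y$), so your parenthetical needs a modified construction and possibly a larger representation dimension than $2^{\lceil (n+m)/2\rceil}$. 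The paper sidesteps this entirely by placing the transpose on Alice's side, setting $O_i=\sum_\ell (x_i)_\ell\,\Gamma_\ell^{\intercal}$ and $Q_j=\sum_\ell (x_{n+j})_\ell\,\Gamma_\ell$; then the transpose trick yields $\tfrac{1}{d}\tr(\Gamma_\ell\Gamma_{\ell'})=\delta_{\ell\ell'}$ directly, with no constraint on the representation. Either device works, and the verification of the local blocks via $\{O_i,O_{i'}\}=2(x_i^{\intercal}x_{i'})\,\id$ is identical in both.
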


\begin{proof}
Let $d = n + m$ and $\{ x_k \}$, $k \in \{1, \ldots, d \}$ be a set of real vectors of dimension 
$d$ such that $(G)_{kk'} = x_k^{\intercal} x_{k'}$.
Moreover, take $\rho_{AB} = \proj{\psi}$, where $\ket{\psi} = \sqrt{d}{}^{-1} \sum_k \ket{k} \ket{k}$ is 
the maximally entangled state, $O_i = \sum_{\ell} (x_{i})_\ell^{\phantom{\intercal}} \Gamma_{\ell}^\intercal$ and $Q_j = \sum_{\ell} (x_{n+j})_{\ell} \Gamma_{\ell}$
 where $\Gamma_\ell$ are generators of the Clifford algebra in dimension $n+m$, i.e., 
 $\{\Gamma_\ell,\Gamma_{\ell'}\} = 2 \delta_{\ell\ell'} \id$. Using the fact that 
 $\Gamma_{\ell}$ are anti-commuting, it is now straight forward to verify that 
the $O_i$ and $Q_j$ have spectrum in $\{ -1, 1\}$ since
\begin{align*}
  O_i O_{i'} &= \bigg( \sum_{\ell} (x_{i})_{\ell}^{\phantom{\intercal}} \Gamma_{\ell}^\intercal \bigg)
  \bigg( \sum_{\ell'} (x_{i'})_{\ell'}^{\phantom{\intercal}} \Gamma_{\ell'}^\intercal \bigg)
  = \frac{1}{2} \sum_{\ell, \ell'}  (x_{i})_{\ell} (x_{i'})_{\ell'} \{ \Gamma_{\ell}, 
  \Gamma_{\ell'} \}^\intercal  = x_i^\intercal x_{i'} \id .
\end{align*}
Thus, $2\gamma^*(\rho_A, Q_i, Q_{i'}) - 1 = \frac{1}{2} \tr(\rho_A \{ O_i, O_{i'} \}) = (G)_{ii'}$ and similarly for $(G)_{(n+j)(n+j')}$.
Finally,
\begin{align*}
  \brakket{\psi}{O_i\otimes Q_j}{\psi}&= 
  \frac{1}{d} \sum_{\ell, \ell'} (x_i)_{\ell} (x_{n+j})_{\ell'} 
  \bigg(
  \sum_{k,k'}\bra{k}\bra{k}\Gamma_{\ell}^\intercal \otimes  \Gamma_{\ell'} \ket{k'}\ket{k'} \bigg) \\
  \nonumber &= 
  \frac{1}{d} \sum_{\ell, \ell'} (x_i)_{\ell} (x_{n+j})_{\ell'} \tr\left(\Gamma_{\ell}\Gamma_{\ell'} \right)
  = \sum_{\ell, \ell'} (x_i)_{\ell} (x_{n+j})_{\ell'}  \delta_{\ell\ell'} = (G)_{i(n+j)} \,. \qedhere
\end{align*}
\end{proof}

\subsection{Two Binary Measurements}

Next, we restrict our attention to the case where two parties, Alice and David, each have two observables at their disposal. 
The measurement setup can in this case be described by the set $\{ \ket{\psi}, O_1, O_2, Q_1, Q_2 \}$. We define 
the following family of semidefinite programs, which calculate the maximal CHSH value, $\beta_{\max}(\gamma^*)$, 
that can be achieved with a setup for which the effective overlap of Alice's observables satisfies 
$\gamma^*(\rho_A, O_1, O_2) = \gamma^*$.
The SDP for $\beta_{\max}(\gamma^*)$ is given by
\begin{align}
  \texttt{maximize:} \quad &
    \tr \big( W G \big) \nonumber\\
  \texttt{subject to:} \quad & G \succeq 0, \nonumber\\
      &(G)_{ii} = 1\ \forall i\ \text{ and} \nonumber\\
      &(G)_{12} = (G)_{21} = 2\gamma^* - 1 \,. \label{eq:SDP}
\end{align}
Note that, since every physical setup has a corresponding matrix $G$ due to Lemma~\ref{lem:tsirelsonext}, the 
maximization is done over all physical setups that satisfy the constraint on the effective overlap. On the 
other hand, Lemma~\ref{lem:tsirelsonext2} tells us that there exists a physical setup\,|\,corresponding to 
the optimal matrix $G^*$\,|\,that achieves any $\beta_{\max} = \tr (W G^*)$.
Note, however, that this does not imply that every setup with a given $\gamma^*$ can be used to reach $\beta_{\max}(\gamma^*)$. 

The function $\beta_{\max}(\gamma^*)$ has a nice analytical form, which was conjectured by M.~Horodecki~\cite{horodecki-pc11} for the two qubit case. Alternatively, it is possible to derive a statement of this type~\cite{lim-pc12} using a result of Seevink and Uffink~\cite{seevinck07}, which bounds the maximal CHSH value in terms of the angle between local qubit measurements.
\begin{lemma}
  \label{lem:horodecki}
  The maximal CHSH value $\beta_{\max}$ that can be achieved by a setup $\{\rho_{AT}, O_1, O_2, Q_1, Q_2 \}$ that has 
  a effective overlap $\gamma^*(\rho_A, O_1, O_2) = \gamma^*$ is given by
  \begin{align*}
    \beta_{\max}(\gamma^*) = 2 \big( \sqrt{\gamma^*} + \sqrt{1-\gamma^*} \big) \,.
  \end{align*}
\end{lemma}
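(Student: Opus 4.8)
The plan is to solve the semidefinite program~\eqref{eq:SDP} in closed form by passing to the geometric picture underlying the Gram-matrix characterization from the preliminaries. First I would note that any feasible $G \succeq 0$ with unit diagonal is the Gram matrix of four real \emph{unit} vectors, $G = B^\intercal B$ with $B = (x_1, x_2, y_1, y_2)$, and that conversely every such collection of unit vectors yields a feasible $G$. Under this identification the constraint $(G)_{12} = 2\gamma^* - 1$ becomes $x_1^\intercal x_2 = 2\gamma^* - 1$, the vectors $y_1, y_2$ remain unconstrained, and a direct computation of the objective gives
\begin{align*}
  \tr(W G) = (G)_{13} + (G)_{14} + (G)_{23} - (G)_{24} = y_1^\intercal (x_1 + x_2) + y_2^\intercal (x_1 - x_2) \,.
\end{align*}

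Next I would maximize over $y_1, y_2$ for fixed $x_1, x_2$. By the Cauchy--Schwarz inequality each term is bounded, $y_1^\intercal (x_1 + x_2) \leq \| x_1 + x_2 \|$ and $y_2^\intercal (x_1 - x_2) \leq \| x_1 - x_2 \|$, with equality when $y_1$ and $y_2$ are the unit vectors along $x_1 + x_2$ and $x_1 - x_2$. Invoking the constraint, $\| x_1 + x_2 \|^2 = 2 + 2 x_1^\intercal x_2 = 4 \gamma^*$ and $\| x_1 - x_2 \|^2 = 2 - 2 x_1^\intercal x_2 = 4(1 - \gamma^*)$, so this maximum equals $2\sqrt{\gamma^*} + 2\sqrt{1 - \gamma^*}$.

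The crucial observation is that this value depends \emph{only} on the fixed inner product $x_1^\intercal x_2$ and is therefore the same for every feasible pair $x_1, x_2$. The Cauchy--Schwarz step furnishes an upper bound valid for all feasible $G$, while the explicit choice of $y_1, y_2$ shows it is attained; by Lemma~\ref{lem:tsirelsonext2} the corresponding optimal $G$ is realized by a genuine physical setup, so that $\beta_{\max}(\gamma^*) = 2(\sqrt{\gamma^*} + \sqrt{1 - \gamma^*})$ as claimed.

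I expect the only delicate point to be the degenerate cases $\gamma^* \in \{0, 1\}$, where one of $x_1 \pm x_2$ vanishes and the maximizing direction of the associated $y$ is undefined; there, however, the corresponding term is zero for every $y$, so the bound still holds with equality trivially. For a purely SDP-based certificate one could instead exhibit a dual-feasible matrix attaining the same value to establish the upper bound, but since the vector argument already delivers both the bound and its attainment, I would not expect this to be needed.
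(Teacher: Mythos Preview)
Your argument is correct. You solve the same SDP as the paper but by a genuinely different route: you pass to the Gram-vector picture and optimize directly via Cauchy--Schwarz, whereas the paper exhibits an explicit primal feasible $G^*$ together with an explicit dual feasible $\Gamma^*$ and invokes weak duality. Your approach is more elementary and more transparent\,---\,it explains geometrically why the optimum takes this form (the optimal $y_1,y_2$ are just the normalized bisectors $x_1\pm x_2$), and in fact your optimal choice reproduces exactly the paper's $G^*$. The paper's primal--dual certificate, on the other hand, is a purely algebraic verification that does not require the vector interpretation and would generalize more mechanically to SDPs where no simple geometric picture is available; you yourself note this alternative at the end. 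Both methods handle the boundary cases $\gamma^*\in\{0,1\}$ without difficulty, as you observe.
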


\begin{proof}
The solution is given by the SDP~\eqref{eq:SDP} and it remains to find feasible solutions 
for both the primal and the dual problem in order to find $\beta_{\max}$.
We first construct a primal feasible solution $G^*$ for the SDP~\eqref{eq:SDP}. We have,
\begin{align*}
  & \qquad \beta_{\max}(\gamma^*) \geq \tr \big( W G^* \big) = 
    2 \big( \sqrt{\gamma^*} + \sqrt{1-\gamma^*} \big)\,, \text{ where } \\
  \nonumber  &
    G^* := 
    \left(\begin{array}{cccc}
      1 & 2\gamma^*-1 & \sqrt{\gamma^*} & \sqrt{1-\gamma^*} \\
      2\gamma^*-1 & 1 & \sqrt{\gamma^*} & -\sqrt{1-\gamma^*} \\
      \sqrt{\gamma^*} & \sqrt{\gamma^*} & 1 & 0 \\
      \sqrt{1-\gamma^*} & -\sqrt{1-\gamma^*} & 0 & 1
    \end{array}\right) \succeq 0\,, \quad \textrm{for all $\gamma^* \in [0, 1]$.}
\end{align*}

To find an upper bound on $\beta_{\max}$, we consider the dual SDP, which is
\begin{align}
 \texttt{minimize:} \quad &
   \Gamma_{11}+\Gamma_{22}+\Gamma_{33}+\Gamma_{44} + (2\gamma^* - 1) (\Gamma_{12} + \Gamma_{21}) \nonumber\\
\nonumber  \texttt{subject to:} \quad &   \Gamma=
   \left(\begin{array}{cccc}
     \Gamma_{11} &  \Gamma_{12} & 0 & 0 \\
     \Gamma_{21} &  \Gamma_{22} & 0 & 0 \\
    0 & 0 &  \Gamma_{33} & 0 \\
    0 & 0 & 0 &  \Gamma_{44}
   \end{array}\right)
   \succeq W \,.
  \end{align}
A feasible solution, $\Gamma^* \succeq W$, is 
\begin{align*}
  \Gamma^* &:=  \left(\begin{array}{cccc}
    \frac{1}{4} \big(\frac{1}{\sqrt{\gamma^*}} + \frac{1}{\sqrt{1-\gamma^*}}\big) & 
    \frac{1}{4} \big(\frac{1}{\sqrt{\gamma^*}} - \frac{1}{\sqrt{1-\gamma^*}}\big) & 0 & 0 \\
    \frac{1}{4} \big(\frac{1}{\sqrt{\gamma^*}} - \frac{1}{\sqrt{1-\gamma^*}}\big) & 
    \frac{1}{4} \big(\frac{1}{\sqrt{\gamma^*}} + \frac{1}{\sqrt{1-\gamma^*}}\big) & 0 & 0 \\
    0 & 0 & \sqrt{\gamma^*} & 0 \\
    0 & 0 & 0 & \sqrt{1-\gamma^*}
   \end{array}\right) \,,
\end{align*}
Thus, due to weak duality of semidefinite programming, it holds that $\beta_{\max}(\gamma^*) \leq \tr (\Gamma^*) + (2\gamma^*-1) (\Gamma^*_{12} + \Gamma^*_{21}) = 2 \big(  \sqrt{\gamma^*} + \sqrt{1-\gamma^*} \big)$, which concludes the proof.
\end{proof}

\subsection{Proof of Theorem~\ref{th:maxoverlap}}

We will need a pivotal result due to Jordan~\cite{jordan1875} (see also~\cite{tsirelson93, masanes06}).
\begin{lemma}[Jordan's Lemma]
  \label{lm:jordan}
  Let $\bX = \{ M^0, M^1 \}$ and $\bY = \{ N^0, N^1 \}$ 
  be two projective measurements with binary outcomes. Then, there exists a projective
  measurement $\bK = \{ P^k \}_k$ that commutes with both 
  $\bX$
  and $\bY$ such that the $P^k$ project on subspaces of dimension 
  at most $2$.
\end{lemma}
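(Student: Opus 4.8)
The plan is to reduce the two binary projective measurements to a pair of Hermitian reflections and then to simultaneously block-diagonalise them. Following the observable formalism of the previous subsection, I would set $A := M^0 - M^1$ and $B := N^0 - N^1$, so that $A$ and $B$ are Hermitian involutions, $A^2 = B^2 = \id$, and—since $M^0 = (\id + A)/2$, $M^1 = (\id - A)/2$ and analogously for $B$—a projector commutes with $\bX$ (resp. $\bY$) if and only if it commutes with $A$ (resp. $B$). The task thus becomes: produce a projective measurement $\bK = \{P^k\}_k$ of projectors of rank at most two, each commuting with both $A$ and $B$.

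The crux of the argument is to exhibit a single Hermitian operator that commutes with both reflections, so that its eigenspaces furnish a common invariant decomposition. I would take the anticommutator $H := AB + BA$. It is manifestly Hermitian, and a short computation using $A^2 = B^2 = \id$ gives $[A,H] = A^2 B + ABA - ABA - BA^2 = B - B = 0$, and likewise $[B,H] = 0$. Hence each eigenspace $V_\mu$ of $H$ is invariant under both $A$ and $B$, and on $V_\mu$ the restricted operators satisfy $A^2 = B^2 = \id$ together with the single scalar relation $AB + BA = \mu\,\id$.

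The remaining step is to split each eigenspace $V_\mu$ into mutually orthogonal invariant blocks of dimension at most two. I would pick an eigenvector $v$ of $A$ inside $V_\mu$, say $Av = \epsilon v$ with $\epsilon \in \{+1,-1\}$, and consider $W := \operatorname{span}\{v, Bv\}$. This space has dimension at most two and is invariant under both operators: $Bv$ and $B(Bv) = v$ lie in $W$, while the eigenspace relation yields $A(Bv) = (\mu\id - BA)v = \mu v - \epsilon Bv \in W$. Because $A$ and $B$ are Hermitian and preserve $W$, they also preserve its orthogonal complement inside $V_\mu$, so I can peel off $W$ and iterate, termination being guaranteed by finite-dimensionality. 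Collecting all blocks obtained across the distinct (hence mutually orthogonal) eigenspaces of $H$ gives an orthogonal decomposition $\h = \bigoplus_k W_k$ with $\dim W_k \le 2$; I then let $P^k$ be the projector onto $W_k$. Each $P^k$ commutes with $A$ and $B$ since $W_k$ and $W_k^\perp$ are invariant under these Hermitian operators, and therefore $\bK = \{P^k\}_k$ commutes with $\bX$ and $\bY$, as required.

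The main obstacle is the first clever choice, namely writing down an operator guaranteed to commute with both (generally non-commuting) reflections; the anticommutator $\{A,B\}$ does exactly this, and everything afterwards is the routine spectral decomposition of $H$ together with the small dihedral-type invariance check for $\operatorname{span}\{v, Bv\}$. A secondary point to keep in view is the orthogonality of the final blocks, which is secured by performing the peeling \emph{within} a single eigenspace of $H$ and by the orthogonality of distinct eigenspaces of a Hermitian operator.
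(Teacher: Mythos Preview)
Your proof is correct. The paper does not actually give a proof of this lemma; it simply cites Jordan's original 1875 paper together with~\cite{tsirelson93,masanes06} and uses the result as a black box. Your argument via the anticommutator $H = \{A,B\}$ is the standard modern route: the commutation $[A,H]=[B,H]=0$ is clean, the existence of an $A$-eigenvector inside each $V_\mu$ is guaranteed since $A$ restricts to a Hermitian operator there, and the invariance check for $\operatorname{span}\{v,Bv\}$ together with the Hermitian peeling-off is sound. The finite-dimensionality assumption you invoke for termination is consistent with the paper's standing hypotheses.
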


Let now $\bK$ be such a projective measurement for which we additionally require that the rank of the $P^k$ is minimal. It is easy to verify that this measurement has the property that the projectors $P^k M^x P^k$ and $P^k M^x P^k$ either vanish or are rank-$1$ projectors. (If, for example, $P^k M^0 P^k$ is not rank-$1$, it must either be $P^k$ or vanish. However, this implies that $P^k M^1 P^k$ also either vanishes or equals $P^k$ and, thus, measuring further
in the basis induced by $P^k M^y P^k$ will reduce the dimension of $\bK$.)
Hence, the projectors can be written in the form $\proj{\xi_k^x} = P^k N^y P^k$ and $\proj{\zeta_k^y} = P^k N^y P^k$, where $\ket{\xi_k^x}$ and $\ket{\zeta_k^x}$ are allowed to be the zero vector.

It remains to relate the effective overlap of observables, $\gamma^*$, to the effective overlap of two POVMs, $c^*$. This is done in the following proposition, from which Theorem~\ref{th:maxoverlap} directly follows. 

\begin{proposition}
  \label{pr:cstarbeta}
  For any measurement setup $\{ \rho_A, \bX, \bY \}$, it holds that
  \begin{align}
    c^*(\rho_A, \bX, \bY) \leq \frac{1}{2} + \frac{\beta}{8}\sqrt{8 - \beta^2} \,,
    \label{eq:cstarbeta}
  \end{align}
  where $\beta = \beta(\rho_{AD}, \bX, \bY, \bR, \bS)$ 
  for any extension $\rho_{AD}$ with $\rho_A = \tr_D(\rho_{AD})$ and for any two
  binary POVMs $\bR$ and $\bS$ on $D$. 
\end{proposition}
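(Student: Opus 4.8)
The plan is to upper-bound $c^*$ by exhibiting a single admissible dilation in its defining infimum, reduce the problem blockwise to single-qubit CHSH scenarios, and then invert the Horodecki relation of Lemma~\ref{lem:horodecki} by convexity. First I would pass to projective measurements: by Neumark's theorem there is an embedding $U:A\to A'$ and projective binary measurements $\bX'=\{M_{A'}^x\}$, $\bY'=\{N_{A'}^y\}$ dilating $\bX$ and $\bY$, and since a dilation preserves all measurement statistics the value $\beta$ is unchanged when computed with $\bX',\bY'$ on $U\rho_A U^\dagger$ (dilating David's measurements to projective as well). To this pair I would apply Jordan's Lemma (Lemma~\ref{lm:jordan}) to obtain a projective $\bK'=\{P_{A'}^k\}$ of minimal rank that commutes with both $\bX'$ and $\bY'$ and whose blocks have dimension at most two. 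Commutativity gives $\sum_k P_{A'}^k M_{A'}^x P_{A'}^k=M_{A'}^x$, so $\{U,\bX',\bY',\bK'\}$ meets the constraints in the definition of the effective overlap and hence $c^*\le c^*_{\bK'}(U\rho_A U^\dagger,\bX',\bY')$.

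Next I would evaluate $c^*_{\bK'}$ block by block using the rank-one projectors $\proj{\xi_k^x}=P_{A'}^k M_{A'}^x P_{A'}^k$ and $\proj{\zeta_k^y}=P_{A'}^k N_{A'}^y P_{A'}^k$ introduced before the proposition. A direct computation gives $\sum_y \proj{\zeta_k^y}\,\proj{\xi_k^x}\,\proj{\zeta_k^y}=\sum_y |\braket{\zeta_k^y}{\xi_k^x}|^2\,\proj{\zeta_k^y}$, whose norm is $\max_y|\braket{\zeta_k^y}{\xi_k^x}|^2$. Since within a two-dimensional block $\{\ket{\xi_k^x}\}$ and $\{\ket{\zeta_k^y}\}$ are orthonormal bases at some relative angle $\theta_k$, the block term equals $\hat\gamma_k:=\max(\cos^2\theta_k,\sin^2\theta_k)\in[\tfrac12,1]$ and $c^*_{\bK'}=\sum_k q_k\,\hat\gamma_k$ with weights $q_k=\tr(P_{A'}^k\,U\rho_A U^\dagger)$ summing to one. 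The link to $\gamma^*$ comes from the block observables $O_1^{(k)}=\proj{\xi_k^0}-\proj{\xi_k^1}$, $O_2^{(k)}=\proj{\zeta_k^0}-\proj{\zeta_k^1}$: using $\{O_1^{(k)},O_2^{(k)}\}=2\cos(2\theta_k)P_{A'}^k$ one gets $(O_1^{(k)}+O_2^{(k)})^2=4\cos^2\theta_k\,P_{A'}^k$, so the block effective overlap of observables is $\gamma_k^*=\cos^2\theta_k$ and $\hat\gamma_k=\max(\gamma_k^*,1-\gamma_k^*)$.

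Because $\bX',\bY'$ are block-diagonal and slicing on Alice's side leaves the correlations with David invariant, the CHSH value decomposes as $\beta=\sum_k q_k\,\beta_k$, where $\beta_k$ is the CHSH value of the normalized block state against David's arbitrary measurements. Each block is a genuine physical setup with Alice restricted to a qubit of overlap $\gamma_k^*$, so Lemma~\ref{lem:horodecki} yields $\beta_k\le\beta_{\max}(\gamma_k^*)=\beta_{\max}(\hat\gamma_k)$, the last equality by the symmetry of $\beta_{\max}(\gamma)=2(\sqrt\gamma+\sqrt{1-\gamma})$ about $\gamma=\tfrac12$, whence $\beta\le\sum_k q_k\,\beta_{\max}(\hat\gamma_k)$. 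The final step is to invert this. Writing $g(\beta):=\tfrac12+\tfrac{\beta}{8}\sqrt{8-\beta^2}$, one verifies that $g$ is exactly the inverse of $\beta_{\max}$ on $[\tfrac12,1]$ and that $g$ is decreasing and concave on $[2,2\sqrt2]$, since $g''(\beta)=\tfrac{\beta}{4}(8-\beta^2)^{-3/2}(\beta^2-12)<0$ there. Then $c^*_{\bK'}=\sum_k q_k\,g(\beta_{\max}(\hat\gamma_k))\le g\big(\sum_k q_k\,\beta_{\max}(\hat\gamma_k)\big)\le g(\beta)$ by Jensen and monotonicity, giving the claim on the nonlocal branch $\beta\ge2$ that carries the content of the statement.

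The main obstacle I anticipate is the blockwise reduction: one must check that the Neumark--Jordan construction is genuinely admissible in the infimum defining $c^*$, and that the operator-norm block term of $c^*_{\bK'}$ coincides exactly with $\hat\gamma_k$, so that the $\max$ over outcomes matches the two-basis overlap $\max(\cos^2\theta_k,\sin^2\theta_k)$ and aligns with the observable overlap $\gamma_k^*$ through the symmetry of $\beta_{\max}$. Once this identification and the concavity of $g$ are in place, the convexity inversion is routine.
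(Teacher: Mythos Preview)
Your proposal is correct and shares the same skeleton as the paper's argument: the Neumark dilation to projective measurements, the application of Jordan's Lemma (Lemma~\ref{lm:jordan}) with minimal-rank blocks, and the identification of the block contribution to $c^*_{\bK'}$ with $\hat\gamma_k=\max_{x,y}|\braket{\xi_k^x}{\zeta_k^y}|^2$ are exactly what the paper does.

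The genuine difference is in how the block information is aggregated. The paper does \emph{not} apply Lemma~\ref{lem:horodecki} block by block and then invoke Jensen. Instead, it builds global modified observables $\tilde O_{A'}^{\bX},\tilde O_{A'}^{\bY}$ by flipping the output labels in each Jordan block so that the block overlap is always the larger one, checks that $\gamma^*(\rho_{A'},\tilde O_{A'}^{\bX},\tilde O_{A'}^{\bY})=\sum_k q_k\hat\gamma_k$, and then shows directly that $\beta\le\beta_{\max}(\gamma^*)$ by a CHSH-symmetry argument: since the block index $K$ can be handed to David (slicing commutes with Alice's observables), David can undo the $k$-dependent relabeling by a suitable permutation of his inputs and outputs, so the original and modified Alice observables achieve the same CHSH value against an optimal David. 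A single application of Lemma~\ref{lem:horodecki} to the global $\gamma^*$ then finishes the proof. Your route replaces this relabeling trick by the concavity of $g(\beta)=\tfrac12+\tfrac{\beta}{8}\sqrt{8-\beta^2}$ and Jensen's inequality; this is arguably more systematic and avoids the symmetry bookkeeping, at the cost of a second-derivative computation. Both arguments only close on the branch $\beta\ge 2$, which you correctly flag; the paper leaves that monotonicity step implicit.
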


\begin{proof}
  It is sufficient to consider projective measurements and pure states as, due to Neumark's dilation 
  theorem and the definition of the effective overlap, there exist projective measurements $\bX'$, $\bY'$, 
  $\bR'$, $\bS'$ and an embedded state $\rho_{A'D'}$ such that
  \begin{align}
      \beta(\rho_{AD}, \bX, \bY, \bR, \bS) = \beta(\rho_{A'D'}, \bX', \bY', \bR', \bS')
    \qquad \textrm{and} \qquad 
      c^*(\rho_{A}, \bX, \bY) \leq \min_{\bK'}\, c_{\bK'}^*(\rho_{A'}, \bX', \bY') \,,
      \label{eq:xxx1}
  \end{align}
  where $\bK' = \{ P_{A'}^k \}_k$ is any projective measurement that commutes with $\bX'$ and $\bY'$.

  According to Lemma~\ref{lm:jordan} and~\eqref{eq:xxx1}, we can thus bound 
  \begin{align}
    c^*(\rho_A, \bX, \bY) 
     \leq \sum_k \tr(P_{A'}^k \rho_{A'}^{\phantom{k}}) \max_x \Big\| \sum_y | \zeta_k^y \rangle\!\langle
     \zeta_k^y | \xi_k^x \rangle\!\langle \xi_k^x | \zeta_k^y \rangle\!\langle \zeta_k^y | \Big\| 
     = \sum_k \tr(P_{A'}^k \rho_{A'}^{\phantom{k}}) \max_{x, y} \big| \!
      \braket{\xi_k^x}{\zeta_k^y} \! \big|^2 \label{eq:maxcc} ,
  \end{align}
  where $P_{A'}^k$ is a decomposition into at most two-dimensional subspaces,
  $\proj{\xi_k^x} = P_{A'}^k M_{A'}^x P_{A'}^k$, and $\proj{\zeta_k^y} = P_{A'}^k N_{A'}^y P_{A'}^k$.
   Now, consider the observables
  \begin{align*}
  \tilde{O}_{A'}^{\bX} = \bigoplus_k \Big( \proj{\xi_k^{x_k}} - \proj{\xi_k^{\bar{x}_k}}\Big) \quad \textrm{and} \quad
  \tilde{O}_{A'}^{\bY} = \bigoplus_k \Big( \proj{\zeta_k^{y_k}} - \proj{\zeta_k^{\bar{y}_k}} \Big) ,   
  \end{align*}
  where $x_k, y_k \in \{0, 1\}$ are the values that maximize the overlap 
  in~\eqref{eq:maxcc} for each value of $k$. Furthermore, $\bar{x}_k = 1 - x_k$ and 
  $\bar{y}_k = 1 - y_k$. Using these observables, it is easy to verify that
  \begin{align}
      \sum_k \tr(P_{A'}^k \rho_{A'}^{\phantom{k}}) \max_{x, y} \big| \!
      \braket{\xi_k^x}{\zeta_k^y} \! \big|^2
      = \gamma^*(\rho_{A'}, \tilde{O}_{A'}^{\bX}, \tilde{O}_{A'}^{\bY}) 
      = \frac{1}{2} + \frac{\beta_{\max}(\gamma^*)}{8} \sqrt{8 - \beta_{\max}(\gamma^*)^2} \nonumber\,,
  \end{align}
  where, in the last step, we used Lemma~\ref{lem:horodecki} and introduce $\beta_{\max}(\gamma^*)$, the maximum CHSH value that can be reached with a bipartite setup that satisifies 
  $\gamma^*(\rho_{A'}, \tilde{O}_{A'}^{\bX}, \tilde{O}_{A'}^{\bY}) = \gamma^*$.
  
  It remains to show that $\beta(\rho_{A'D}, \bX', \bY', \bR', \bS') \leq \beta_{\max}(\gamma^*)$.
  First note that due to the fact that $\bK'$ commutes with $\bX'$ and $\bY'$, we have
  $\beta(\rho_{A'D}, \bX', \bY', \bR', \bS') 
  = \beta(\rho_{A'DK}, \bX', \bY', \bR', \bS')$ where $\rho_{A'DK} = \sum_k \proj{k} \otimes  
  P_{A'}^k \rho_{A'D} P_{A'}^k$. Thus, we can assume without loss of generality 
  that the maximum {CHSH} value is achieved
  with an extension and measurements that potentially depend on the value of $K$.
  Furthermore, we introduce a purification $\ket{\psi}$ of
  $\rho_{A'DK}$ and write 
  \begin{align}
    \beta(\rho_{A'D}, \bX', \bY', \bR', \bS') &\leq \max_{\ket{\psi},\, Q_{D'}^{\bR},\, Q_{D'}^{\bS}}\,
  \beta \big(\ket{\psi}\!, O_{A'}^{\bX}, O_{A'}^{\bY}, Q_{D'}^{\bR}, Q_{D'}^{\bS} \big) \nonumber\\ 
  &= \max_{\ket{\psi},\, Q_{D'}^{\bR},\, Q_{D'}^{\bS}}\, 
  \beta \big(\ket{\psi}\!, \tilde{O}_{A'}^{\bX}, \tilde{O}_{A'}^{\bY}, Q_{D'K}^{\bR}, Q_{D'K}^{\bS} \big) 
  \leq \beta_{\max}(\gamma^*), \label{eq:betabound}
  \end{align}
  where the measurements $\bX$ and $\bY$ are represented as observables
  \begin{align*}
    O_{A'}^{\bX} = M_{A'}^0 - M_{A'}^1 = \bigoplus_k \Big( \proj{\xi_k^0} - \proj{\xi_k^1} \Big) \quad\!
      \textrm{and} \quad\!
    O_{A'}^{\bY} = N_{A'}^0 - N_{A'}^1 = \bigoplus_k \Big( \proj{\zeta_k^0} - \proj{\zeta_k^1} \Big) .  
  \end{align*}
  The equality in~\eqref{eq:betabound} requires some explanation. Note that the observables
  $O$ and $\tilde{O}$ only differ in the way outputs, $0$ or $1$, are labelled for each $k$.
  However, due to the symmetry of the CHSH value, it is easy to 
  verify that David can simulate a $k$-dependent relabeling of Alice's outputs by 
  permuting his inputs and outputs. More precisely, we have
  \begin{align*}
    \beta \big(\ket{\psi}\!, O_{A'}^{\bX}, O_{A'}^{\bY}, Q_{D'}^{\bR}, Q_{D'}^{\bS} \big) 
    = \beta \big(\ket{\psi}\!, \tilde{O}_{A'}^{\bX}, \tilde{O}_{A'}^{\bY}, Q_{D'K}^{\bR}, Q_{D'K}^{\bS} \big)
  \end{align*}
  for the observables $Q_{D'K}^{\bR} = \sum_k \proj{k} \otimes Q_{D'}^{\bR,k}$ and 
  $Q_{D'K}^{\bS} = \sum_k \proj{k} \otimes Q_{D'}^{\bS,k}$, where
  \begin{align*}
    \big\{ Q_{D'}^{\bR,k}, Q_{D'}^{\bS,k} \big\} = (-1)^{x_k}
    \begin{cases} 
      \big\{  Q_{D'}^{\bR}, Q_{D'}^{\bS} \big\} & \text{if } x_k \oplus y_k = 0 \\ 
      \big\{ Q_{D'}^{\bS}, Q_{D'}^{\bR} \big\} & \text{if } x_k \oplus y_k = 1
    \end{cases} \,.
  \end{align*}
  The last inequality in~\eqref{eq:betabound} follows by definition of $\beta_{\max}(\gamma^*)$ 
  and concludes the proof.
\end{proof}

\end{document}